 \numberwithin{equation}{section}
\def\cite{\citep}
\newcommand{\bmath}{\begin{equation}}
\newcommand{\emath}{\end{equation}}
\newcommand{\bmathnn}{\begin{eqnarray*}}
\newcommand{\emathnn}{\end{eqnarray*}}
\declaretheorem[numberwithin=section]{theorem}
\declaretheorem[sibling=theorem]{lemma}
\declaretheorem[sibling=theorem]{proposition}
\declaretheorem[sibling=theorem]{corollary}
\def\setminus{-}
\def\1{{\bf{1}}}
\begin{document}

\title{The Value of Excess Supply in Spatial Matching Markets\thanks{We thank Nick Arnosti, Itai Ashlagi, Yash Kanoria, David Kreps, Paul Milgrom, and several seminar participants for helpful comments.}}

\author{
Mohammad Akbarpour,\thanks{Graduate School of Business, Stanford University.  \protect\url{mohamwad@stanford.edu}
}
 \, Yeganeh Alimohammadi,\thanks{Department of Management Science and Engineering, Stanford University.  \protect\url{yeganeh@stanford.edu}
 } \\    Shengwu Li,\thanks{Department of Economics, Harvard University. \protect\url {shengwu_li@fas.harvard.edu}} \, Amin Saberi\thanks{Department of Management Science and Engineering, Stanford University.  \protect\url{saberi@stanford.edu}
 }
 }

\maketitle

\begin{abstract}
We study dynamic matching in a spatial setting. Drivers are distributed at random on some interval. Riders arrive in some (possibly adversarial) order at randomly drawn points. The platform observes the location of the drivers, and can match newly arrived riders immediately, or can wait for more riders to arrive. Unmatched riders incur a waiting cost $c$ per period. The platform can match riders and drivers, irrevocably. The cost of matching a driver to a rider is equal to the distance between them. We quantify the value of slightly increasing supply. We prove that when there are $(1+\epsilon)$ drivers per rider (for any $\epsilon > 0$), the cost of matching returned by a simple greedy algorithm which pairs each arriving rider to the closest available driver is $O(\log^3(n))$, where $n$ is the number of riders. On the other hand, with equal number of drivers and riders, even the \emph{ex post} optimal matching does not have a cost less than $\Theta(\sqrt{n})$. Our results shed light on the important role of (small) excess supply in spatial matching markets.
\end{abstract}

{\footnotesize\textbf{Keywords:} Matching, spatial, dynamic, ride-sharing, random walk. }

{\footnotesize\textbf{JEL Code:} D47.}

\thispagestyle{empty}


\onehalfspacing


\newpage

\setcounter{page}{1}

\section{Introduction}

In many real-world settings, we want to provide services when they are requested, by assigning nearby service providers.  Examples of such `spatial' matching problems include assigning drivers to riders in ride-hailing, assigning ambulances to medical emergencies, or (more figuratively) assigning workers to tasks that require varying skillsets.
These situations can be described as \textit{dynamic spatial matching problems}.  They have two key features.  First, requests arrive over time, and we are required to fulfil each request when it is made.  Second, each request (`rider') and each provider (`driver') has a geographic location, and nearby drivers are better than distant drivers.

One natural approach to spatial matching problems is to first balance supply and demand, and then to optimize the matching between drivers and riders.  The cost of hiring excess drivers is immediate and concrete, whereas the potential benefits are subtle and may depend on the matching algorithm.  Moreover, in some jurisdictions, ride-hailing platforms face a penalty for having excess drivers, because they are required to pay higher wages when the utilization rate is low.\footnote{\textit{New York City Considers New Pay Rules for Uber Drivers}, The New York Times, July 2 2018 and \textit{Seattle Passes Minimum Pay Rate for Uber and Lyft Drivers}, The New York Times, September 29 2020.}

In our model, $n$ `drivers' are uniformly distributed on the interval $[0,\ell]$, and  $n$ `riders' arrive over time, also uniformly distributed.  At any time, the platform match some driver to some rider, which removes both from the system.  Each unmatched rider accrues a waiting cost $c$ per unit time.  We start by assuming that the platform is obliged to serve all riders, and wants to minimize the total cost, \emph{i.e.} the total distance between riders and their matched drivers plus any waiting costs.

In this model, the optimal policy is potentially complex.  It depends on the locations of all the remaining drivers and riders, the arrival process of riders, and also on what the platform knows about the future.  Hence, as a tractable benchmark, suppose that the platform can perfectly predict the positions and the arrival times of the riders, and matches drivers to riders in such a way as to minimize the \textit{ex post} total cost.  This ``omniscient'' matching algorithm performs at least as well as any feasible algorithm. In particular, all drivers are present at the beginning, and the omniscient algorithm knows the optimal ex post matching.  Thus, the omniscient algorithm matches every rider upon arrival and pays no waiting costs.

To quantify the potential gains from excess supply, we compare the performance of the omniscient algorithm in balanced and unbalanced markets.  First, we prove that in a balanced market, with $n$ riders and $n$ drivers, the expected cost of this omniscient algorithm is $\Theta(\ell\sqrt{n})$.  Next, we consider an unbalanced market, with $n$ riders and $n(1 + \epsilon )$ drivers, for any positive constant $\epsilon$.  
We show that for any $\epsilon > 0$, adding $\epsilon n$ drivers causes the expected cost of the omniscient algorithm to fall from $\Theta(\ell\sqrt{n})$ to $O(\ell)$.  Hence, unbalancedness can potentially generate stark reductions in cost; the \emph{total} cost of the matching in an unbalanced market does not even rise with the market size $n$.

The omniscient benchmark is tractable but unrealistic.  What if the platform cannot see the future or has limited computational power? How much of the gain from increasing supply can we realistically achieve?  Consider the following greedy algorithm: Match each arriving rider to the nearest available driver.  The greedy algorithm does not exploit (and does not require) information about future arrivals, and it does not take into account the positions of any drivers except the nearest. Moreover, suppose riders arrive in an \textit{adversarial} order.  We prove that the greedy algorithm with $n(1+\epsilon)$ drivers has an expected matching cost of $O(\ell\log^3 (n))$.  It substantially outperforms the expected cost of the omniscient algorithm with $n$ drivers, and closes most of the gap to the omniscient algorithm with $n(1+\epsilon)$ drivers.  

Next we investigate whether the difficulty of the balanced-market problem is caused by the requirement that the platform serve all riders, even those who are far away.  We study a relaxed problem, in which the platform faces a balanced market but can leave a rider unmatched for a penalty $\nu$, which may depend on the market size $n$.  Of course, the problem is trivial without a lower bound for $\nu$.  For instance, if $\nu < \ell n^{-\frac{1}{2}}$, then for large $n$, the platform prefers matching no riders to matching every rider.  Therefore, suppose instead that $\nu \geq \ell n^{-\frac{1}{2} + \delta}$ for some positive constant $\delta$.  We prove that the expected cost of the omniscient algorithm is $\Omega(\ell n^\delta)$.  Hence, giving the platform the option to forego matching distant riders still leads to costs that are at least polynomial in $n$.

After presenting our theoretical results, we use simulations to investigate whether similar findings hold for small numbers of drivers and riders. To do so, we pose the following question: For a fixed number of riders $n$, how many excess drivers are needed in order for the greedy algorithm to beat the balanced-market omniscient algorithm? We find that for $n=25$, having 1 excess driver and greedily matching riders has a lower total distance cost than the omniscient algorithm.  For $n = 100$, $4$ excess drivers are enough, and for $n=1000$, and $13$ excess drivers are enough. 

Taken together, our results shed light on the importance of excess supply in spatial matching markets. Various real-world matching platforms have adopted policies with the goal of balancing supply and demand.  For instance, Uber asserts in an explanatory video about surge pricing that ``the main focus is trying to bring balance to the marketplace."\footnote{\url{https://www.uber.com/us/en/marketplace/pricing/surge-pricing/}, accessed March 25 2021.}  Similarly, Lyft states, ``Dynamic pricing is the main technology that allows us to maintain market balance in real-time."\footnote{\url{https://eng.lyft.com/dynamic-pricing-to-sustain-marketplace-balance-1d23a8d1be90}, accessed March 25 2021.} Our results suggest that these maxims should not be taken too literally.  Instead of exactly balancing supply and demand, mild excess supply can substantially improve performance, and even simple matching algorithms can realize these gains.

\subsection{Related work}

Online matching has a long history in theoretical computer science with many gems. 
\citealt{KarpVV} introduced the online bipartite matching problem. More than a decade later, and motivated by applications in online advertising, 
\citealt{MehtaSaberi} formulated the AdWords problem which has been studied extensively in its own right by \citealt{DevanurJainKleinberg,ManshadiGharanSaberi,goelmehta,MahdianSaberiNazer, HuangZhang2020}.
There is also a surge of interest in online matching in economics. In particular, 
\citealt{AkbarpourLi} studied a dynamic matching problem on stochastic networks, where agents arrive and depart over time and quantified the value of liquidity in such markets. 
\citealt{Baccara2020} studied optimal dynamic matching and thickness in a two-sided model.
Moreover, online matching models have been applied to multiple domains, including kidney exchange \citep{unver2010dynamic, AshlaghiKidney2013,AndersonAshlagi, ashlagi2019matching, akbarpour2020unpaired}, housing markets \citep{Leshno2019DynamicMI,blochHouy, arnosti2019not}, and ride-sharing \citep{OzkanWard, liu2019efficiency, castillo2020benefits}.

Online matching is also studied on metric spaces when the placement of the nodes and their arrivals are adversarial and the matching algorithm is deterministic in \citealt{KHULLER1994255,kalyanasundaram1993online, Antoniadis2015}, or randomized in \citealt{Meyerson2006RandomizedOA,Bansal07,NayyarRaghvendra17}, as well as the case where the arrival and the position of the nodes are chosen from a known distribution by \citealt{Wajc2019}. These results do not compare directly with the analysis of the current paper as they aim to find algorithms with the smallest competitive ratio, while we directly bound the cost of optimum matching as well as the cost of the solution of a particular (greedy) algorithm.  In fact, from the point of view of competitive analysis, it was shown  by \citealt{kalyanasundaram1991line,kalyanasundaram1993online} that  greedy can perform exponentially poorly in the worst case, and by \citealt{GAIRING201988} it  cannot perform better than $\Omega(n^{0.292})$
with random arrivals. \citealt{OMBM2016} studied greedy  in an experimental point of view and observed that greedy performs better than most of the known algorithms. The result of our study will give theoretical support to this observation.

The most relevant line of studies from a technical perspective is the (empirical) optimal transport problem.  There, the expected cost of the \emph{ex post} optimal matching is studied in Euclidean spaces with uniform random points by \citealt{Monvel2002AlmostSC,holroyd2020minimal,Ajtai1984OnOM,Trevisan2016},  and Poisson point processes by  \citealt{holroyd2020minimal,hoffman2006}. Also,  \citealt{FriezeReed90} study {\em non-bipartite} matching of $2n$ uniform random points on $[0,1]$ and show that the cost of greedy is $\Theta(\log(n))$.
The above results consider settings that correspond to balanced markets. To the best of our knowledge, our work is the first to give a constant bound on the \emph{ex-post} optimal matching when the market is unbalanced. 

Spatial models are used for studying ride-hailing and ride-sharing applications. In particular, 
\citealt{OzanSaban}, used such models to study spatial price discrimination and 
\citealt{Omar2020} for price optimization and its effect on demand and supply.

Finally, our results are reminiscent of the effects of  the imbalance between two sides of the market in stable marriage problem  with random preference  \citealt{Ashlagi17,pittel2017, kanoria2020random}  and  metric preferences   \citealt{Abadi2017StableMI},  except that we are interested in cost efficiency of matching rather than its stability.  Also, tangential to our result is the work by \citealt{KlempererBulow} on the effect of extra bidders in auctions.

\section{Model and Main Results}

\subsection{Spatial setting and matching algorithms}

  Time passes in discrete periods. Let $R=\{r_1,\ldots, r_n\}$ and $D=\{d_1,\ldots, d_m\}$ be two sets of points chosen independently and uniformly at random on $[0,\ell]$, where $\ell$ can be a function of $n$ and $m$. We call $R$ the set of riders and $D$ the set of drivers. The uniform distribution is not essential for our results; in \Cref{sec.otherdist} we show our main results will not change if the location of drivers and riders is drawn from any distribution with appropriate continuity properties. Throughout the paper we assume $m \geq n$. We say we are in a \textit{balanced market} when $m=n$, and in an \textit{unbalanced market} when $m > n$. 
  
  At $t=0$, platform observes the positions of the drivers. Then, an adversary, which knows all the positions of riders and drivers and the entire state of the matching algorithm, chooses riders one by one and reveals their positions to the algorithm at times $t=1$ to $t=n$. Our main results continue to hold for stochastic arrivals (from a known or unknown distribution).

A matching $M$ is a set of ordered pairs $(r,d)$ such that $r\in R$ and $d\in D$, and each rider or driver belongs to at most one pair in $M$. A matching algorithm takes as input the state of the system at time $t$ (the location of all drivers, as well as riders who have arrived so far) and produces as output a (possibly empty) matching between drivers and riders who are in the market at time $t$. Each unmatched rider incurs a \textit{waiting cost} $c$ per unit of time. Matches are irrevocable.

 The \textit{distance cost} of matching a rider at position $r$ with a driver at position $d$  is equal to $|r-d|$. The cost of a matching algorithm $ALG$, $cost(ALG)$, is equal to the sum over the distance costs of all the pairs plus any waiting cost that riders incur. The goal is to choose a matching to minimize the total cost.

We consider three matching algorithms:
\begin{itemize}
\item The \textit{optimal} matching algorithm, knowing that the order of arrivals is adverserial, it takes as input the current location of all remaining drivers and the location of the newly arrived rider, and outputs a driver to be matched with the new rider. This algorithm is complex.
\item The \textit{omniscient} algorithm is a theoretical benchmark that performs \textit{better} than optimal. It captures the idea that the planner can  engage in acquiring costly information about the location of future riders' arrivals. This algorithm  knows the locations and arrival times of all the riders in advance and matches them optimally when they arrive. 
\item The \textit{greedy} matching algorithm matches each rider to the closest driver.
\end{itemize}

We denote the cost of omniscient and greedy algorithms by $cost(OMN)$ and $cost(Greedy)$, respectively. 

A few remarks about these algorithms worth pointing out. First, for a fixed set of riders and drivers, the omniscient algorithm has a lower cost than the optimal algorithm, which itself has a lower cost than the greedy.

Secondly, while the optimal algorithm is generally complex, the omniscient algorithm is more tractable. In fact, $cost(OMN)$ is the solution to the following integer program:
\begin{align}\label{IP}
\begin{split}
\text{minimize } &\sum_{i=1}^n\sum_{j=1}^m x_{(i,j)}|r_i-d_j| \\
\text{s.t. } & \sum_{j=1}^n x_{(i,j)}= 1, \forall  r_i\in R,\\
& \sum_{i=1}^n x_{(i,j)}\leq 1, \forall d_j\in D,\\
& x_{(i,j)}\in\{0,1\}.
\end{split}
\end{align}

Importantly, the omniscient algorithm is ``detail-free'' in the sense that its performance is independent of waiting cost $c$ and whether the arrival process of riders is random or adversarial. The reason is that the omniscient algorithm knows the optimal \emph{ex post} matching, and as such matches each rider immediately upon arrival to its optimal \emph{ex post} driver; as such, it incurs no waiting cost and performs equally well for any arrival process.

Similarly, the greedy algorithm incurs no waiting cost and thus its performance is independent of $c$. Nevertheless, unlike the omniscient, this algorithm is na\"{\i}ve and makes mistakes both \textit{ex post} and \textit{ex ante}. This is because greedy algorithm ignores that there are \textit{externalities} involved with removing a driver in the system. \autoref{fig: bad example greedy} illustrates a simple example in which greedy makes an \textit{ex post} mistake. Here, black circles are the drivers and $r_1$ arrives before $r_2$. Greedy matches $r_1$ to the driver on the right, whereas the omniscient matches $r_1$ to the driver on the left, making $r_1$ slightly worse off and $r_2$ much better off.

Note that in the example of \autoref{fig: bad example greedy}, a realistic matching algorithm could in principle wait for one period so that $r_2$ arrives, and then choose the same matching as the omniscient. The total distance cost of such ``patient'' algorithm is same as the omniscient, but the algorithm incurs a waiting cost $c$ because rider $r_1$ is forced to wait for one period.\footnote{This is essentially the idea behind the algorithm used by Uber: \url{https://www.uber.com/us/en/marketplace/matching/}}


\begin{figure}[htbp]
\centering
  \begin{tikzpicture}[
    x=5.5mm,
    y=5.5mm,
  ]

    \begin{scope}[
      semithick,
      |-|,
      >={Stealth[]},
    ]
      \draw (-.3, 0) -- (7, 0);
    \end{scope}

    \draw [-,thick, gray] (.5,0) to [out=310,in=230] (6,0);
    \draw [-,thick, gray] (3,0) to [out=270,in=270] (4.5,0);

    \tikzstyle{dot2} = [fill=white, draw];
    \tikzstyle{dot1} = [fill=darkgray, draw];
     \foreach \x / \y in {{.5/1},{3/2},{4.5/1}, {6 /2}} {
        \path[dot\y] (\x, 0) circle[radius=2.5mm];
    }
    \node[](e1) at (3, 0) {$r_1$};
    \node[](e1) at (6, 0) {$r_2$};
  \end{tikzpicture}
\caption{Greedy algorithm can perform much worse than the omniscient algorithm. Here, dark circles are drivers and white nodes are riders. $r_1$ arrives before $r_2$. Greedy matches $r_1$ to the right driver and $r_2$ to the left, which is an \emph{ex post} mistake.}
\label{fig: bad example greedy}
  \end{figure}
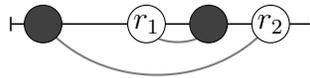


Finally, the greedy algorithm has a `decentralized' interpretation. Suppose there is no central planner, riders arrive at the market one by one, and choose a driver to maximize their own utility. Since we assumed drivers are homogeneous except for their locations, a new rider will pick the closest driver, which is precisely what the greedy algorithm would do. As such, our analysis of the greedy algorithm is in effect analyzing a decentralized setting with selfish agents.

\subsection{Sophisticated matching algorithms and the value of extra drivers}

To quantify the value of raising the supply, we first compare the performance of omniscient matching algorithm in a balanced market to the performance of the omniscient algorithm in an unbalanced market---when there are $n$ riders and $m\geq n(1+\epsilon)$ drivers.

We prove that the cost of the omniscient matching in the balanced market is $\Theta(\ell\sqrt{n})$, whereas if the market is unbalanced then the cost is $O(\ell)$.  More precisely, we prove the following theorem:

\begin{theorem}\label{thm:balancedvsunbalanced}
  Let $R=\{r_1,\ldots,r_n\}$ and $D=\{d_1,\ldots, d_m\}$ be two sets random points drawn independently from a uniform distribution on $[0,\ell]$. Then
\begin{enumerate}
    \item\label{thm:balancedomn} In a balanced market, there exists $C>0$ such that for all $n\geq 1$
    \[\mathbb{E}[cost(OMN)] \geq C\ell\sqrt{n}.\] 
\item \label{thm:unabalancedomn}
In an unbalanced market where $m\geq (1+\epsilon)n$, for some $\epsilon>0$. Then, 
there exists some $C_\epsilon>0 $ such that for all $n\geq 4\epsilon^{-1}$
\[\mathbb{E}[cost(OMN)]\leq C_\epsilon \ell.\]
\end{enumerate}
The expectation in both parts is over the randomness of $R$ and $D$ on the interval $[0,\ell]$.
\end{theorem}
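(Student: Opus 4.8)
The plan is to handle the two parts separately, reducing both to estimates on the imbalance between the counting functions $N_R(x)=|\{i:r_i\le x\}|$ and $N_D(x)=|\{j:d_j\le x\}|$. By rescaling I may assume $\ell=1$ and multiply the final bounds by $\ell$.

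For Part 1 the key is a ``flow across a point'' lower bound valid for \emph{every} matching, hence in particular for $cost(OMN)$. Writing each matched pair's distance as $|r-d|=\int_0^1\mathbf 1[x\text{ lies between }r\text{ and }d]\,dx$ and summing, the total distance equals $\int_0^1(\#\text{ pairs straddling }x)\,dx$, and the number of pairs straddling $x$ is at least $|N_R(x)-N_D(x)|$ since the two sides of $x$ must exchange that many matches. Taking expectations, $\mathbb E[cost(OMN)]\ge\int_0^1\mathbb E|N_R(x)-N_D(x)|\,dx$. In the balanced case $Z(x):=N_R(x)-N_D(x)$ is a difference of two independent $\mathrm{Binomial}(n,x)$ variables, so it is symmetric with variance $2nx(1-x)$. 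An anti-concentration step — e.g.\ Hölder's inequality in the form $\mathbb E|Z|\ge(\mathbb E Z^2)^{3/2}/(\mathbb E Z^4)^{1/2}$ together with the fourth-moment bound $\mathbb E Z^4=O((\mathbb E Z^2)^2)$ — gives $\mathbb E|Z(x)|\ge c\sqrt{nx(1-x)}$ on the middle range $x\in[\tfrac14,\tfrac34]$, where the variance is $\Theta(n)$. Integrating over this range yields $\mathbb E[cost(OMN)]\ge C\sqrt n$, with a constant chosen uniformly in $n\ge1$ (small $n$ absorbed directly).

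For Part 2 I would prove the upper bound by exhibiting an explicit feasible matching of expected cost $O_\epsilon(1)$; since $OMN$ is optimal, this suffices. Take the dyadic decomposition of $[0,1]$ into intervals $I$ of width $2^{-j}$ at level $j$, and match bottom-up: within each finest interval match as many rider--driver pairs as possible, promote the unmatched riders to the parent, and repeat. A short telescoping computation shows that the number of riders forced to leave a dyadic interval $I$ is exactly $\Delta_I:=\max(0,\,R_I-D_I)$, where $R_I,D_I$ count riders and drivers in $I$ — this is independent of the recursion because, for each child, (overflow) $-$ (leftover drivers) $=R_I-D_I$, so surpluses and deficits cancel across siblings. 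Since a rider matched across a level-$j$ boundary travels at most $2^{-j}$, the total cost is at most $\sum_{j\ge0}2^{-j}\sum_{I\text{ at level }j+1}\Delta_I$, and because every rider reaches level $0$, where $m\ge n$ guarantees enough drivers, all riders are matched. Writing $p=2^{-j}$ and $\delta_j:=\mathbb E[(R_I-D_I)^+]$ for a generic level-$j$ interval, the bound becomes $\mathbb E[cost]\lesssim\sum_{j\ge1}\delta_j$ (the level-$j$ overflow has expected total $2^j\delta_j$ and travels $\lesssim2^{-j}$, contributing $\asymp\delta_j$). Here $R_I-D_I$ has mean $\le-\epsilon np$ and variance $\Theta(np)$, giving two regimes: for $np\lesssim\epsilon^{-2}$, half-normal behavior yields $\delta_j\asymp\sqrt{np}$; for $np\gtrsim\epsilon^{-2}$, a Chernoff/Bernstein bound exploiting the drift yields $\delta_j\lesssim\sqrt{np}\,e^{-c\epsilon^2 np}$. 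The series $\sum_j\delta_j$ is dominated by the transition scale $np\approx\epsilon^{-2}$, where $\delta_j\asymp\epsilon^{-1}$, and decays geometrically on both sides, so $\mathbb E[cost(OMN)]\le\mathbb E[cost]=O(\epsilon^{-1})=O_\epsilon(1)$.

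The main obstacle is Part 2: one must show that the \emph{cumulative} displacement cost across all scales stays bounded, which hinges on the two-regime estimate for $\mathbb E[(R_I-D_I)^+]$ and, crucially, on the exponential decay in the drift-dominated regime — this is what prevents the coarse scales from contributing a growing ($\log n$ or $\sqrt n$) amount, as they do when $\epsilon=0$. The delicate point is that at coarse scales $R_I-D_I$ is a near-Gaussian with negative mean of order $\epsilon np$ and standard deviation of order $\sqrt{np}$, so one needs a tail bound sharp and uniform enough that, after weighting by the $2^{-j}$ travel distance and summing, the total telescopes to a constant depending only on $\epsilon$. Verifying the clean ``overflow $=\Delta_I$'' bookkeeping and the moment estimates in both regimes is where the real work lies; by contrast Part 1 is routine once the flow lower bound and anti-concentration are in place.
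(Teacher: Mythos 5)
Your argument is correct, but it follows a genuinely different route from the paper's, most notably in Part 2. For Part 1, the paper identifies the \emph{ex post} optimal matching in the balanced market as the order-statistics matching $r_{(i)} \leftrightarrow d_{(i)}$ (via an uncrossing argument) and then invokes an exact formula for $\mathbb{E}\big[\sum_i |r_{(i)}-d_{(i)}|\big]$ from the literature on random-walk areas; your flow-across-a-point inequality $cost(M)\geq \int_0^\ell |N_R(x)-N_D(x)|\,dx$ bounds the same quantity (the two are in fact equal for the nested optimal matching), and your fourth-moment anti-concentration step is a clean, self-contained substitute for the cited asymptotic. For Part 2, the paper builds the walk $W$ that steps up at drivers and down at riders, partitions $[0,\ell]$ into $m-n+1$ slices between consecutive last-exit times so that each slice has exactly one excess driver, and uses exact lattice-path (generalized Catalan) counting to show the number of points per slice has an exponentially decaying law, whence $\mathbb{E}\big[\sum_i \hat H_i\hat\tau_i\big]=O(\epsilon^{-4}\ell)$; you instead run the classical multiscale (AKT-style) dyadic transportation argument, with the overflow identity $\Delta_I=(R_I-D_I)^+$ and a two-regime bound on $\mathbb{E}[(R_I-D_I)^+]$ that switches from the half-normal estimate $\sqrt{np}$ to a Bernstein tail $\sqrt{np}\,e^{-c\epsilon^2 np}$ at the critical scale $np\asymp\epsilon^{-2}$. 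Both are sound; your version is more portable (it would survive in higher dimensions and gives the sharper dependence $O(\epsilon^{-1}\ell)$ rather than $O(\epsilon^{-4}\ell)$), while the paper's slice decomposition yields extra structural information --- a partition into intervals each carrying exactly one surplus driver with $O(\log n)$ points whp --- which it then reuses to analyze the greedy algorithm in Theorem 2.2, something your dyadic construction does not directly provide. The only details worth pinning down in a write-up are the base case of the recursion at the finest dyadic level (take $2^{-j_{\max}}\leq n^{-2}$ so its contribution is negligible) and the observation that $\Delta_{[0,\ell]}=0$ since $m\geq n$, so every rider is indeed matched.
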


Hence, when we have access to computational power and perfect prediction of future arrivals, unbalancedness can generate stark reduction in costs: the cost in an unbalanced market is independent of $n$, whereas in a balanced market the cost grows at rate $\sqrt{n}$.

\subsection{The greedy matching algorithm and the value of extra drivers}

Omniscient algorithm is practically infeasible, since it can predict the entire future demand and has limited computational power. One may wonder whether increasing supply is highly valuable only for omniscient. Putting differently, how much of the gain from increasing supply can we realistically achieve?  To answer this question, we consider the greedy algorithm.  This algorithm does not exploit (and does not require) information about future arrivals, and it does not take into account the positions of any driver except the nearest; it ignores the global `network' structure of the system.  In the following theorem, we prove that the greedy algorithm with $(1+\epsilon)n$ drivers has an expected matching cost of $O(\ell\log^3 (n))$.

\begin{theorem}\label{thm:unbalancedgreedy}
Consider an unbalanced market where $m\geq (1+\epsilon)n$, for some $\epsilon>0$. Then, 
there exists some $C_\epsilon>0 $ such that for large enough $n$ 
\[\mathbb{E}[cost(Greedy)]\leq C_\epsilon \ell\log^3(n),\]
where the expectation is over the randomness of $R$ and $D$ on the interval $[0,\ell]$.
\end{theorem}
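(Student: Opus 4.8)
The plan is to work after the normalization $\ell = 1$ (every distance, and hence the total cost, scales linearly in $\ell$, so the stated bound is recovered by multiplying by $\ell$ at the end), and to control the greedy cost through the lengths of its individual matches. Writing $\delta_i$ for the distance of the $i$-th greedy match and $L_t$ for the number of matches of length at least $t$, the identity $cost(\textup{Greedy}) = \sum_i \delta_i = \int_0^1 L_t\, dt$ reduces the theorem to a tail bound on $\mathbb{E}[L_t]$. Discretizing the integral over the $O(\log n)$ dyadic scales $t = 2^{-k}$, it suffices to establish for every $t$ in the relevant range the bound $\mathbb{E}[L_t] = O(\log^2 n / t)$ alongside the trivial $L_t \le n$: the fine scales $t \lesssim \log^2 n / n$ then contribute $\int_0^{t} n\,ds = O(\log^2 n)$, while $\int (\log^2 n)/t\, dt$ over the remaining scales supplies the third logarithmic factor, for a total of $O(\log^3 n)$.

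The engine for bounding $\mathbb{E}[L_t]$ is the defining local property of greedy: if a rider at position $r$ is matched at distance at least $t$, then at the instant of its match there is no available driver in the open window $(r-t, r+t)$, so every driver lying in that window has already been matched to another rider. First I would use this to dispatch the ``static'' obstruction. Partitioning $[0,1]$ into $O(1/t)$ windows of length $t$, the number of drivers minus riders in a fixed window is a sum of independent indicators with positive mean $\epsilon n t$ and standard deviation $O(\sqrt{nt})$; a Chernoff bound together with a union bound over the $O(1/t) \le n$ windows shows that, once $t \ge C_\epsilon \log n / n$, with probability $1 - n^{-\Omega(1)}$ every window of length $t$ contains strictly more drivers than riders. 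On this high-probability event no window can be emptied of drivers purely because the realized demand inside it exceeds the realized supply, which is exactly the mechanism that controls long matches for the omniscient algorithm in part~\ref{thm:unabalancedomn} of \Cref{thm:balancedvsunbalanced}.

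The main obstacle is that, unlike the omniscient matching, greedy under an \emph{adversarial} arrival order can empty a window of its drivers even when that window carries a local surplus: a rider arriving from far away may ``steal'' a nearby driver, so that local driver-exhaustion---and hence a long match---is manufactured by the order rather than by the point configuration. To control this I would set up a charging recursion across the dyadic scales that tracks, for each window, the net demand forced to ``escape'' it (inside riders that must be served by outside drivers, together with outside riders that consume inside drivers). The emptiness property lets each escaping unit be charged to a distinct driver in an enclosing window of twice the length, and the excess-supply drift ensures that the net escaping demand is dominated by the running driver-minus-rider imbalance, a random walk with positive drift $\epsilon$. Bounding this walk by a maximal inequality shows that the escaping demand contracts geometrically as the scale coarsens, so that the number of matches reaching scale $2^{-k}$ is $O(2^k \cdot \polylog(n))$, which is precisely the per-scale bound needed above.

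I expect the delicate step to be making this charging simultaneously valid for the worst-case order while keeping the losses polylogarithmic: one must show that the adversary's ability to waste drivers is itself dominated by the same imbalance walk, so that wasted drivers cannot cascade across many scales, and it is the combination of the maximal inequality for this walk, the union bound over the $O(1/t)$ windows, and the sum over the $O(\log n)$ scales that produces the three logarithmic factors. The remaining pieces---the Hoeffding estimates, the geometric summation, and the recovery of the $\ell$ prefactor---are routine.
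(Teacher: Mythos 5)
Your reduction of the theorem to a per-scale tail bound on $L_t$, and your identification of the central difficulty, are both sound: the static Chernoff estimate (every length-$t$ window carries a driver surplus once $t\gtrsim \log n/n$) does not by itself control greedy, because an adversarial arrival order can drain a locally-surplus window by routing distant riders into it. But the step you offer to overcome this---the ``charging recursion across dyadic scales'' in which escaping demand is ``dominated by the running driver-minus-rider imbalance'' and ``contracts geometrically''---is not an argument; it is a description of what an argument would have to deliver. Nothing in the proposal rules out that a distant rider's theft of a driver triggers further thefts that propagate the deficit across many windows and many scales, and you flag this yourself as the delicate step without resolving it. As written, the key estimate $\mathbb{E}[L_t]=O(\log^2 n/t)$ is asserted, not proved, so the proof has a genuine gap at its core.

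The paper closes exactly this gap with a deterministic combinatorial invariant rather than a probabilistic charging scheme. First, it replaces fixed windows by random \emph{slices} $[\gamma_{i-1},\gamma_i]$ cut at the last-exit times of the walk $W(x)=|D\cap[0,x]|-|R\cap[0,x]|$, so that every slice carries a surplus of exactly one driver and, with high probability, contains $O(\log n)$ points (\Cref{lm: dist hat H_i}, \Cref{cor: max Hi}). Second---and this is the idea your proposal is missing---it observes that greedy never leaves an unmatched driver strictly inside the interval $I_{(r,d)}$ of any match it makes; hence, merging the intervals of all greedy matches into connected components, each final component contains at most two unmatched drivers (the extremal ones), and every match with a driver in a component has its rider there too. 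A component spanning $K$ slices would then contain at least $(K-2)-2\alpha\log n\ge 3$ unmatched drivers once $K\ge 2\alpha\log n+5$, a contradiction; so every match spans $O(\log n)$ slices, i.e.\ $O(\log^2 n)$ inter-driver gaps, each of length $O(\ell\log n/n)$ by \Cref{prop: max dist}. This yields a worst-case (over arrival orders) bound of $O(\ell\log^3 n/n)$ on \emph{every} match length on the high-probability event, which is stronger than your per-scale tail bound and is what actually defeats the adversary. If you wish to salvage your multi-scale plan, this component invariant is the lemma you would need in place of the unproven geometric contraction of escaping demand.
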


This result shows that the na\"{\i}ve greedy algorithm with $n(1+\epsilon))$ drivers performs \textit{much} better than the omniscient algorithm with $n$ drivers, and not much worse than the omniscient algorithm with $n(1+\epsilon))$ drivers. Importantly, the bound for the greedy algorithm holds \textit{even if riders arrive in an adversarial order.}

\subsection{Matching for other distributions}\label{sec.otherdist}

So far we have assumed that riders and drivers are uniformly distributed, but our results yield corollaries for other distributions.

Suppose that riders and drivers are distributed independently and identically on the unit interval, with cumulative distribution function $F:[0,1] \rightarrow [0,1]$, satisfying $F(0) = 0$ and $F(1) = 1$.  Given a random variable $x$ with CDF $F$, the random variable $F(x)$ is uniformly distributed on $[0,1]$.  Let us define the normalized distance cost of matching rider $r_i$ and driver $d_j$ as $| F(r_i) -  F(d_j)|$, and let $cost_F(\cdot)$ be the total cost of an algorithm when distance is measured with normalized cost.

\begin{corollary}
If $F$ is Lipschitz-continuous, then in a balanced market, there exists $C>0$ such that for all $n\geq 1$   \[\mathbb{E}[cost(OMN)] \geq C\sqrt{n}.\]
\end{corollary}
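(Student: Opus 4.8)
The plan is to reduce the corollary directly to the already-established lower bound of \Cref{thm:balancedomn}, using the Lipschitz assumption to compare the normalized-cost objective for the non-uniform distribution $F$ against the ordinary-distance objective for the uniform distribution. The key observation is the one the authors have already flagged: if $x$ has CDF $F$, then $F(x)$ is uniform on $[0,1]$. So I would first define the transformed point sets $\tilde R = \{F(r_1),\ldots,F(r_n)\}$ and $\tilde D = \{F(d_1),\ldots,F(d_n)\}$. Since $r_i, d_j$ are i.i.d.\ with CDF $F$, the points $F(r_i), F(d_j)$ are i.i.d.\ uniform on $[0,1]$, which is exactly the hypothesis of \Cref{thm:balancedomn} with $\ell = 1$. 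By definition, the normalized distance cost $|F(r_i) - F(d_j)|$ is precisely the ordinary distance between the transformed points $F(r_i)$ and $F(d_j)$.

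The crucial step is that the matching decisions are invariant under this monotone reparametrization: any feasible matching $M$ of $(R,D)$ corresponds to exactly the same combinatorial matching of $(\tilde R, \tilde D)$, and $F$ being a CDF is nondecreasing, so the set of feasible matchings is unchanged. Because $cost_F(M)$ equals the ordinary cost of the corresponding matching of the transformed (uniform) points, the minimizer coincides: $\mathbb{E}[cost_F(\textup{OMN})]$ for the $F$-distributed market equals $\mathbb{E}[cost(\textup{OMN})]$ for a uniform market on $[0,1]$. Applying \Cref{thm:balancedomn} with $\ell = 1$ gives $\mathbb{E}[cost_F(\textup{OMN})] \geq C\sqrt{n}$ for the same universal constant $C$, which is the claim.

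I would then note where the Lipschitz hypothesis actually enters. In fact, for the lower bound on \emph{normalized} cost, the argument above goes through for \emph{any} CDF $F$ with $F(0)=0$, $F(1)=1$, since the pushforward of the distribution under $F$ is uniform regardless of continuity properties. The Lipschitz assumption is what makes the corollary economically meaningful: it guarantees that the normalized cost $|F(r_i)-F(d_j)|$ is comparable to the true physical distance $|r_i - d_j|$, specifically $|F(r_i)-F(d_j)| \leq K|r_i - d_j|$ where $K$ is the Lipschitz constant, so that a lower bound on normalized cost translates into the bound $\mathbb{E}[cost(\textup{OMN})] \geq (C/K)\sqrt{n}$ on the true distance cost. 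I would make this translation explicit and absorb $K$ into the constant.

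The only genuine subtlety — and the place I expect to spend a sentence being careful — is the handling of ties and of points where $F$ is flat or has jumps, so that the map $x \mapsto F(x)$ may not be injective; one must check that the monotonicity of $F$ suffices to preserve the feasible-matching structure and that the cost identity holds for the optimal matching on both sides without needing strict monotonicity. Since the reduction only requires that $F$ be nondecreasing (to preserve relative order up to ties) and that the pushforward be uniform, this is not a real obstacle, but it is the one point where I would state the invariance claim precisely rather than waving at it. Everything else is a direct invocation of \Cref{thm:balancedomn}.
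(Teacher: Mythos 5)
Your proposal is correct and follows essentially the same route as the paper: push the points forward through $F$ to reduce the lower bound on the \emph{normalized} cost to \Cref{thm:balancedvsunbalanced} Part \ref{thm:balancedomn} with $\ell=1$, then use $|F(r_i)-F(d_j)|\leq \alpha|r_i-d_j|$ to convert this into a lower bound on the true distance cost of $OMN$, absorbing the Lipschitz constant. The paper writes this as the chain $C'\sqrt{n}\leq \mathbb{E}[cost_F(OMN_F)]\leq \mathbb{E}[cost_F(OMN)]\leq \alpha\,\mathbb{E}[cost(OMN)]$, which is exactly the argument you describe.
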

\begin{proof}
Let $OMN_F$ be the omniscient algorithm that minimizes the \textit{ex post} normalized cost.  By \Cref{thm:balancedvsunbalanced}, there exists $C' > 0$ such that for all $n \geq 1$,
\begin{equation}\label{eq:lips_upper_1}
    C' \sqrt{n} \leq \mathbb{E}[cost_F(OMN_F)] \leq \mathbb{E}[cost_F(OMN)]
\end{equation}
Let $\alpha$ be the Lipschitz constant of $F$.  For any pair $(r_i,d_j)$, we have $|F(r_i) - F(d_j)| \leq \alpha |r_i - d_j|$. Hence,
\begin{equation}\label{eq:lips_upper_2}
    \mathbb{E}[cost_F(OMN)] \leq \alpha \mathbb{E}[cost(OMN)].
\end{equation}
Combining Equations \ref{eq:lips_upper_1} and \ref{eq:lips_upper_2} completes the proof.
\end{proof}

Let the normalized location of rider $r_i$ be $F(r_i)$, and similarly for drivers.  Let $Greedy_F$ be the Greedy algorithm but with the normalized locations as inputs.

\begin{corollary}
If $F^{-1}$ is Lipschitz continuous, then for any an unbalanced market where $m\geq (1+\epsilon)n$, for some $\epsilon>0$, there exists some $C_\epsilon>0 $ such that for large enough $n$ 
\[\mathbb{E}[cost(Greedy_F)]\leq C_\epsilon \log^3(n).\]
\end{corollary}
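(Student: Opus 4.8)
The plan is to mirror the proof of the previous corollary, but to run the argument in the opposite direction: rather than lifting a lower bound on normalized cost up to a lower bound on actual cost, I want to push an upper bound on normalized cost down to an upper bound on actual cost. The key observation is that the normalized locations $F(r_i)$ and $F(d_j)$ are independent and uniform on $[0,1]$, so the normalized problem is exactly an instance of the uniform unbalanced market with $\ell=1$, to which \Cref{thm:unbalancedgreedy} applies verbatim.

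First I would invoke \Cref{thm:unbalancedgreedy} on the normalized instance. Since $F(r_i),F(d_j)$ are i.i.d.\ uniform on $[0,1]$ and $Greedy_F$ is precisely the greedy algorithm fed these normalized inputs, the theorem (with $\ell=1$) yields
\[
  \mathbb{E}[cost_F(Greedy_F)] \;\le\; C_\epsilon \log^3(n)
\]
for large enough $n$, where $cost_F$ scores each matched pair by its normalized distance. Note that because the theorem permits an arbitrary adversarial arrival order, it is immaterial that the adversary may order the riders using their actual (un-normalized) positions: any such order is still some ordering of the normalized points, which the theorem already covers.

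Next I would convert normalized cost into actual cost using the Lipschitz continuity of $F^{-1}$. Let $\beta$ be its Lipschitz constant. Writing $r_i=F^{-1}(F(r_i))$ and $d_j=F^{-1}(F(d_j))$, every pair satisfies
\[
  |r_i-d_j| \;=\; \bigl|F^{-1}(F(r_i))-F^{-1}(F(d_j))\bigr| \;\le\; \beta\,|F(r_i)-F(d_j)|.
\]
The crucial point is that $Greedy_F$ produces the \emph{same} set of matched pairs regardless of whether we subsequently score them with actual or with normalized distance; only the per-pair weights change. Summing the displayed inequality over the pairs of this fixed matching gives $cost(Greedy_F)\le \beta\,cost_F(Greedy_F)$ pointwise, and taking expectations and combining with the previous bound yields $\mathbb{E}[cost(Greedy_F)]\le \beta C_\epsilon \log^3(n)$, which is the claim.

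I do not expect a genuine obstacle here; the argument is short and parallels the first corollary. The only points that require care are (i) getting the direction of the Lipschitz hypothesis right---one needs $F^{-1}$, rather than $F$, to be Lipschitz, since we are bounding actual distance by normalized distance---and (ii) observing that the greedy matching is determined entirely in normalized coordinates, so the same pair set is being reweighted rather than re-optimized. Neither step involves any real calculation.
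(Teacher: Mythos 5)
Your proposal is correct and follows essentially the same route as the paper: apply \Cref{thm:unbalancedgreedy} to the normalized (uniform) instance to bound $\mathbb{E}[cost_F(Greedy_F)]$, then use the Lipschitz constant of $F^{-1}$ to bound each actual pair distance by a constant times its normalized distance. Your extra remarks --- that the adversarial ordering is already covered and that the matching produced by $Greedy_F$ is fixed in normalized coordinates and merely reweighted --- are sound clarifications of steps the paper leaves implicit.
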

\begin{proof}
Let $\alpha$ be the Lipschitz constant of $F^{-1}$. For any pair $(r_i,d_j)$, we have
$$|r_i - d_j| = |F^{-1}(F(r_i)) - F^{-1}(F(d_j)| \leq \alpha |F(r_i) - F(d_j)|.$$
Hence, we have that
\begin{equation}\label{eq:lips_lower_1}
\mathbb{E}[cost(Greedy_F)] \leq \alpha \mathbb{E}[cost_F(Greedy_F)]. 
\end{equation}
Moreover, by  \Cref{thm:unbalancedgreedy}, there exists $C'_{\epsilon} > 0$ such that
\begin{equation}\label{eq:lips_lower_2}
    \mathbb{E}[cost_F(Greedy_F)] \leq C'_{\epsilon} \log^3(n).
\end{equation}
Combining Equations \ref{eq:lips_lower_1} and \ref{eq:lips_lower_2} completes the proof.
\end{proof}

\subsection{What if we can ignore some (far away) riders?}

Finally, we investigate whether the difficulty of the balanced-market problem is caused by the requirement that the platform serves all riders, even those who are far away.  We study a relaxed problem, in which the platform can leave a rider unmatched at a penalty $\nu$. We allow $\nu$ to depend on the market size $n$.
 First note that without a lower bound for $\nu$, the problem is trivial.  For instance, if $\nu = \frac{{\ell}}{n}$, the the platform will choose to match no rider, and pay a cost $n\frac{{\ell}}{n}={\ell}$. In fact, for any $\nu < \frac{{\ell}}{n^{1/2}}$ and for large $n$, the platform prefers matching \textit{no} riders to matching \textit{every} rider.  To exclude such trivial cases, we assume instead that $\nu \geq \frac{{\ell}}{n^{1/2 - \delta}}$ for some $\delta > 0$.  
 With this, the cost of the optimal matching is given by the following integer programming:
\begin{align}\label{IP with nu}
\begin{split}
\text{minimize } &\sum_{i=1}^n\sum_{j=1}^m x_{(i,j)}|r_i-d_j|+\sum_{i=1}^n\nu(1-\sum_{j=1}^m x_{(i,j)}) \\
\text{s.t. } & \sum_{j=1}^n x_{(i,j)}\leq 1, \forall  r_i\in R\\
& \sum_{i=1}^n x_{(i,j)}\leq 1, \forall d_j\in D,\\
& x_{(i,j)}\in\{0,1\}.
\end{split}
\end{align}
Here for each rider $r_i$, the value of $(1-\sum_{j=1}^m x_{ij})$ shows whether this rider is matched or not. We call $M^*$, the  matching corresponding to the solution of \eqref{IP with nu}, as \textit{the minimum matching with penalty}. 
 The next theorem then proves that the expected cost of the omniscient matching is $\Omega(n^\delta)$.

\begin{theorem}\label{thm: balanced} 
Given integers $n\geq 0$ and $\ell$,  let $R=\{r_1,\ldots,r_n\}$ and $D=\{d_1,\ldots, d_n\}$ be two sets random points drawn independently from a uniform distribution on $[0,\ell]$. Then for $M^*$,  the minimum cost matching with penalty, 
    \begin{enumerate}
    \item \label{part: balanced thm no cost}If $\nu\geq \ell$ then there exist $C_1, C_2>0$ such that for all $n\geq 1$
    \[C_1\ell\sqrt{n}\geq \mathbb{E}[cost(M^*)]\geq C_2\ell\sqrt{n}.\]
        \item \label{part: balanced thm with cost} If $\nu\geq \frac{\ell}{n^{1/2-\delta}}$ for some $\delta>0$ then there exists a constant $C>0$ such that for large enough $n$ we have 
    \[\mathbb{E}[cost(M^*)]\geq C\ell n^{\delta}.\]
    \end{enumerate}
The expectation in both parts is over the randomness of $R$ and $D$ on the interval $[0,\ell]$. 
\end{theorem}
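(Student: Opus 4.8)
The plan is to reduce both parts to the behavior of the signed discrepancy $G(x) := N_R(x)-N_D(x)$, where $N_R(x),N_D(x)$ count the riders and drivers in $[0,x]$. Since cost and positions both scale linearly in $\ell$, I would set $\ell=1$ throughout and restore $\ell$ at the end. The structural fact I use repeatedly is the one-dimensional optimal-transport identity: the minimum full matching is monotone with cost $\int_0^1|G(x)|\,dx$, and more generally the total matched distance of any partial matching equals $\int_0^1 c(x)\,dx$, where $c(x)$ is the number of matched pairs straddling $x$. For the first part ($\nu\geq\ell$), I would first note that leaving a rider unmatched is never profitable: a balanced partial matching leaves equally many idle riders and drivers, and re-matching one idle pair changes the objective by $|r-d|-\nu\leq\ell-\nu\leq 0$. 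Hence $M^*$ is a full matching and $cost(M^*)=cost(OMN)=\int_0^1|G(x)|\,dx$; the lower bound is then exactly the first part of \Cref{thm:balancedvsunbalanced}, and the upper bound follows from $\mathbb{E}|G(x)|\leq\sqrt{\var G(x)}=\sqrt{2n\,x(1-x)}$ (a difference of two independent $\mathrm{Bin}(n,x)$'s) integrated to give $\mathbb{E}[cost(M^*)]=\int_0^1\mathbb{E}|G(x)|\,dx\leq\sqrt{2n}\int_0^1\sqrt{x(1-x)}\,dx=C_1\sqrt n$.

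The heart is the second part. The first step is a discrepancy lower bound on the penalized cost. Fix $x$ and classify the $N_R(x)$ riders in $[0,x]$: each is matched to a driver in $[0,x]$ (at most $N_D(x)$ of these), matched to a driver in $(x,1]$ (a straddling pair, counted by $c(x)$), or unmatched (at most $u$, the total number of unmatched riders). Counting gives $c(x)\geq G(x)-u$, and since $c(x)\geq 0$, in fact $c(x)\geq(G(x)-u)^+$. Integrating and adding the penalty yields, for the optimal solution, $cost(M^*)\geq\int_0^1(G(x)-u)^+\,dx+\nu u\geq\min_{u\geq 0}\Phi(u)$, where $\Phi(u):=\int_0^1(G(x)-u)^+\,dx+\nu u$. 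Writing $h(u):=\int_0^1(G(x)-u)^+dx$ (convex, decreasing) and $g(u):=\nu u$, one has $\Phi\geq\max(h,g)$, so $\min_u\Phi(u)$ is at least the value $\nu u_c$ at the unique crossing $h(u_c)=\nu u_c$.

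It then suffices to show $\mathbb{E}[\nu u_c]\geq Cn^\delta$, i.e. that the crossing level $u_c$ is typically of order $\sqrt n$. Since $g$ is increasing and $h$ decreasing, $u_c\geq a$ whenever $h(a)\geq\nu a$; taking $a=\tfrac12\sqrt n$, on the event $\{h(\tfrac12\sqrt n)\geq\tfrac12\nu\sqrt n=\tfrac12 n^\delta\}$ we get $cost(M^*)\geq\nu u_c\geq\tfrac12 n^\delta$. So the theorem reduces to showing this event has probability bounded below by a constant independent of $n$. I would prove this by a point-to-interval argument: by the CLT, $G(1/2)$ is a difference of two independent $\mathrm{Bin}(n,1/2)$'s, so $\mathbb{P}[G(1/2)\geq\sqrt n]\geq c_1>0$; and on a short interval $J=[\tfrac12-\eta,\tfrac12+\eta]$ a maximal inequality (Kolmogorov/Doob, using that $x\mapsto G(x)-G(1/2)$ is a sum of independent increments of total variance $O(n\eta)$) gives $\mathbb{P}[\max_{x\in J}|G(x)-G(1/2)|>\tfrac14\sqrt n]\leq C\eta$. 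Fixing $\eta$ a small constant, with probability at least $c_1/2$ we have $G\geq\tfrac34\sqrt n$ throughout $J$, so $h(\tfrac12\sqrt n)\geq\int_J(G-\tfrac12\sqrt n)^+\,dx\geq 2\eta\cdot\tfrac14\sqrt n=\tfrac{\eta}{2}\sqrt n\geq\tfrac12 n^\delta$ for large $n$ (as $\delta<\tfrac12$). This yields $\mathbb{E}[cost(M^*)]\geq\tfrac12 n^\delta\cdot\tfrac{c_1}{2}$ and, restoring $\ell$, the claimed $\Omega(\ell n^\delta)$ bound.

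The main obstacle I anticipate is exactly this last probabilistic estimate: the crossing bound reduces everything to guaranteeing that the positive excursion of $G$ above a constant fraction of its natural scale $\sqrt n$ contributes $\Omega(\sqrt n)$ to $\int(G-a)^+$ with constant probability. A single-point anti-concentration bound does not suffice, since the excursion could a priori be a thin spike; one must control $G$ over an interval of positive length, which is precisely why the maximal inequality (equivalently, the Brownian-bridge scaling $G/\sqrt n\Rightarrow\sqrt 2\,\mathcal B$ together with continuity of the limit) is needed. Some care is also required to combine the pointwise event and the fluctuation event into one of constant probability, which I would handle by the crude bound $\mathbb{P}[A\cap B]\geq\mathbb{P}[A]-\mathbb{P}[B^c]$ rather than any independence claim.
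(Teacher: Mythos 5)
Your argument is correct in substance but takes a genuinely different route from the paper's, most visibly in Part~\ref{part: balanced thm with cost}. The paper works \emph{locally}: it first reduces to a nested optimal matching (\Cref{prop: nested}), observes that a nested pair $(r,d)$ satisfies $W(r)=W(d)$ so that $|d-r|$ is bounded below by the first return time $\tau_r$ of the walk \eqref{eq: defi tau}, and then sums a per-rider bound $\mathbb{E}[\tau\wedge\nu]\gtrsim \ell n^{\delta/2-3/4}$ (\Cref{lm: tau lower bound}, proved by a lattice-path computation) over $\Theta(n)$ riders at which the walk is at level $\approx-3\sqrt n$. You instead work \emph{globally}: the counting inequality $c(x)\geq (G(x)-u)^+$ holds for \emph{any} partial matching with no structural preprocessing, and the crossing argument $cost(M^*)\geq \nu u_c$ converts the problem into showing that $\int(G-\tfrac12\sqrt n)^+\,dx=\Omega(\sqrt n)$ with constant probability. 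Your route buys you a cleaner deterministic skeleton (no nesting/orientation propositions, no return-time distribution) at the price of an empirical-process estimate: you must control $G$ uniformly over an interval of constant length, whereas the paper only ever needs the walk at single points plus the separately-proved return-time lemma. Both yield the stated $\Omega(\ell n^\delta)$; the paper's sum actually gives the slightly stronger $\Omega(\ell n^{1/4+\delta/2})$.

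Three points need patching. First, you write $\tfrac12\nu\sqrt n=\tfrac12 n^\delta$, which silently assumes $\nu=\ell n^{\delta-1/2}$ exactly; for larger $\nu$ your target event $h(\tfrac12\sqrt n)\geq \tfrac12\nu\sqrt n$ becomes harder. This is fixed in one line by noting that $\min_M cost_\nu(M)$ is non-decreasing in $\nu$, so you may assume equality. Second, the ``Kolmogorov/Doob'' maximal inequality is not immediate as stated: the increments of $x\mapsto G(x)$ over disjoint intervals are multinomially dependent, not independent, so you need either the associated martingale normalization of the empirical process, a DKW-type bound, or the functional CLT to a Brownian bridge with the Portmanteau theorem applied to the open event $\{\inf_J \mathcal B>3/(4\sqrt2)\}$ --- any of these works, but one must be invoked explicitly. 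Third, for Part~\ref{part: balanced thm no cost} you take the lower bound from \Cref{thm:balancedvsunbalanced}; in the paper that statement is itself established via this theorem, so to avoid circularity you should prove the lower bound directly, e.g.\ from $\mathbb{E}|G(x)|\geq c\sqrt{n\,x(1-x)}$ for a centered difference of independent binomials, integrated over $x$ --- which fits naturally with the $\int|G|$ identity you already use for the upper bound.
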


 Hence, giving the platform the option to forego matching distant riders still leads to costs that are at least polynomial in $n$.

Note that by setting $\nu=\ell$ the cost of the minimum matching with and without penalty are the same, because  the penalty of not matching a rider is more than the cost of matching to some driver within distance $\ell$. 

\subsection{Experiments}

 One may suspect that our theoretical results hold only in the limit as $n$ gets very large, and wonders whether they continue to hold in small markets. To address this issue, we run simulations and compare omniscient and greedy matching algorithms. In particular, we ask: in a unit interval and for a given number of riders, how many extra drivers are needed for the greedy to beat the omniscient algorithm? 
 
  \autoref{fig: beat_optimal} depicts, for a given number of drivers, the percentage of extra drivers required by the greedy to beat the omniscient in the balanced market. As it is clear, the percentage is decreasing as the market gets larger, but it is small even for small markets. For instance, for $n=25$, having 1 excess driver and greedily matching riders has a lower total distance cost than the omniscient algorithm.  For $n = 100$, $4$ excess drivers are enough, and for $n=1000$, and $13$ excess drivers are enough. 
 
\begin{figure}[!htbp]
\centering
\includegraphics[width=7cm]{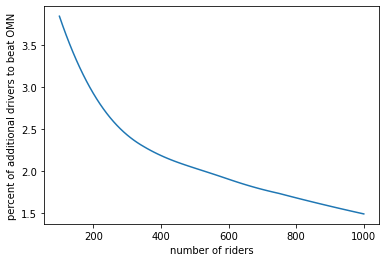}
\caption{Additional percentage  of drivers required for greedy to outperform omniscient.}
\label{fig: beat_optimal}
\end{figure}

\section{The Unbalanced Market with More Drivers}\label{sec: unbalanced}

This section is on the analysis of the unbalanced market. The key tool  is to represent a sequence of drivers and riders with a random walk, as discussed in  \Cref{sec: RW}.
Then in  \Cref{sec: proof unbalanced} we describe the relation of the cost of omniscient algorithm and the random walk to prove  \Cref{thm:balancedvsunbalanced} Part \ref{thm:unabalancedomn}. 
Finally,   \Cref{sec: greedy} is on the analysis of the greedy algorithm and the proof of  \Cref{thm:unbalancedgreedy}.

\subsection{A random walk representation}\label{sec: RW}
By representing the sequence of riders and drivers with a  random walk, we show that it is possible to divide $[0,\ell]$ into  $m-n+1$ (random) sub-intervals, which we refer to as \textit{slices}, such that: 
\begin{enumerate} [i] \item\label{condition i} There are more drivers than riders in each slice. \item\label{condition ii} The total number of riders and drivers in each slice decays exponentially fast. 
\end{enumerate}

Given the sets of riders $R=\{r_1,\ldots,r_n\}$ and drivers $D=\{d_1,\ldots d_m\}$,  consider the walk $W:[0,\ell]\rightarrow \mathbb{Z}$, such that $W(0)=0$ and for each $y>x$,
\begin{equation}\label{eq: walk}
W(y)-W(x)= |D\cap [x,y]| - |R\cap [x,y]|.
\end{equation}
 Define the \textit{exit time} $\gamma_i$, as  the last time the walk hits $i$, i.e.,
\begin{equation}\label{eq: gamma}
\gamma_i=\sup_{t\geq 0}\{t:W(t)-W(0)\leq i \}.
\end{equation}
Observe that $W(\ell)-W(0)=m-n$. So, there are in total $m-n+1$ exit times for each $i\in\{0,1,\ldots, m-n\}$, where $\gamma_{m-n}=\ell$.  
Consider the set of slices $[0,\gamma_{0}],[\gamma_0,\gamma_1],\ldots, [\gamma_{m-n-1},\gamma_{m-n}]$. By definition, in each slice $[\gamma_{i-1},\gamma_{i}]$, the number of drivers is one more than the number of riders, and hence condition \ref{condition i} is satisfied. So, it remains to prove condition \ref{condition ii}.

 We say  $W$ \textit{hops up} at $x$ if $x\in D $ and \textit{hops down} if $x\in R$. 
Let the random variable $H_T$ show the number of hops  in $[0,T]$, 
\begin{equation}\label{eq: definiiton H_T}
    H_T=|\{x\leq T: x\in R\text{ or } x\in D \}|.
\end{equation}
Also, define
 \begin{align}\label{eq: defi hat Hi}
 \begin{array}{lr}
 \hat H_0=H_{\gamma_0},&\\
  \hat H_i=H_{\gamma_i}-H_{\gamma_{i-1}}-1,&\text{ for }0\leq i\leq m-n+1.
\end{array}
\end{align}
For condition \ref{condition ii}, we need to study the distribution of $\hat H_i$s, since  there are $\frac{\hat H_i}{2}$ riders and $\frac{\hat H_i}{2}+1$ drivers in $[\gamma_{i-1},\gamma_i]$.

Consider a random walk that independently at each step hops up  with probability $\frac{m}{n+m}=\frac{1+\epsilon}{2+\epsilon}$ or  down with probability $\frac{1}{2+\epsilon}$. For such a walk, it is known that the number of hops between two consecutive exit times are independent and their expected values are bounded by a constant \citealt{bertoin_doney_1996, Janson86}. 
In our case,  eventhough the probability of hopping up and down changes over time for the walk $W$, it is still possible to find the distribution of $\hat H_i$s and upper bound them.

\begin{lemma}\label{lm: dist hat H_i}
Let $R$ and $D$ be as in  \Cref{thm:balancedvsunbalanced}, and let $\hat H_i$ be defined as in \eqref{eq: defi hat Hi}. Then 
\begin{enumerate}
\item\label{part: dist H0} For $i=0$
\[\mathbb{P}(\hat H_0=2k)=\frac{m-n}{m+n-2k}\frac{{2k\choose k}{m+n-2k\choose n-k}}{{m+n\choose n}}.\]
    \item\label{part: dist Hi} For $i\geq 1$
\[\mathbb{P}(\hat H_i=2k)=\frac{{2k\choose k}}{k+1}\frac{{m+n-2k\choose n-k}}{{m+n\choose n}},\]
\item\label{part: bound dist Hi} If $m\geq (1+\epsilon)n$, and $n\geq 4\epsilon^{-1}$ then for small enough $\epsilon>0$, there exists some $C>0$ such that for all $i\geq 1$ and $k\leq n$,
\[\mathbb{P}(\hat H_i=2k)\leq \frac{C}{k\sqrt k }(1- \frac{\epsilon^2}{18})^k,\]
and for $i=0$,
\[\mathbb{P}(\hat H_0=2k)\leq \frac{C}{\sqrt k }(1- \frac{\epsilon^2}{18})^k.\]
\end{enumerate}
\end{lemma}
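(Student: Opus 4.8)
The plan is to strip away the geometry and reduce the lemma to a counting statement about a single uniformly random lattice path. Since the $m+n$ locations are i.i.d.\ and almost surely distinct, conditioning on the unordered set of positions makes the left-to-right sequence of labels a uniformly random arrangement of $m$ symbols $D$ and $n$ symbols $R$, and $\hat H_i$ depends only on this arrangement (the slice endpoints $\gamma_i$ are determined by $W$, hence by the \emph{order} of the points, not their spacings). So it suffices to study the $\pm1$ walk $S_0=0,S_1,\dots,S_N$, $N=m+n$, where $D$ is a $+1$ step and $R$ a $-1$ step; it ends at $S_N=m-n=:h>0$, and each of the $\binom{m+n}{n}$ arrangements is equally likely. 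For $0\le i\le h$ let $\tau_i$ be the index of the \emph{last} visit of $S$ to level $i$; then $\tau_0<\tau_1<\dots<\tau_h=N$, and right after each $\tau_i$ the walk steps up and never returns to level $i$. This gives a unique factorization $S=I\cdot B_1\cdots B_h$, where $I$ (up to $\tau_0$) is an unconstrained bridge from $0$ to $0$ of length $\hat H_0$, and each $B_i$ is one up-step ($i-1\to i$) followed by a path that stays $\ge i$ and returns to $i$ at $\tau_i$, i.e.\ a Dyck excursion of length $\hat H_i=2k_i$. Conversely any choice of bridge and Dyck excursions reassembles into a valid path, so this is a bijection.

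From the bijection Parts \ref{part: dist H0} and \ref{part: dist Hi} are direct counts, using that a Dyck excursion of semilength $k$ has $C_k=\frac1{k+1}\binom{2k}{k}$ realizations and an unconstrained bridge of length $2k$ has $\binom{2k}{k}$. For Part \ref{part: dist Hi}, fixing the level-$i$ excursion to semilength $k$ and deleting the whole block $B_i$ leaves a general path from $0$ to $h-1$ with $m-k-1$ up- and $n-k$ down-steps, of which there are $\binom{m+n-2k-1}{n-k}$; this gives
\[
\mathbb P(\hat H_i=2k)=\frac{C_k\,\binom{m+n-2k-1}{n-k}}{\binom{m+n}{n}},
\]
independent of $i$, as the statement requires. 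For Part \ref{part: dist H0}, the event $\hat H_0=2k$ says the last visit to $0$ is at step $2k$: a free bridge of length $2k$ followed by a path from $0$ to $h$ that stays strictly positive afterward. By the cycle (ballot) lemma the number of the latter, with $m-k$ up- and $n-k$ down-steps, is $\frac{m-n}{m+n-2k}\binom{m+n-2k}{n-k}$, yielding exactly the displayed formula.

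For Part \ref{part: bound dist Hi} I would estimate the common binomial ratio. Since $\binom{m+n-2k-1}{n-k}\le\binom{m+n-2k}{n-k}$, it is enough to bound, with $(x)_k:=x(x-1)\cdots(x-k+1)$,
\[
\frac{\binom{m+n-2k}{n-k}}{\binom{m+n}{n}}=\frac{(n)_k(m)_k}{(m+n)_{2k}}=\prod_{i=0}^{k-1}\frac{(n-i)(m-i)}{(m+n-2i)(m+n-2i-1)}.
\]
The key algebraic identity is $\frac{4(n-i)(m-i)}{(m+n-2i)^2}=1-\big(\tfrac{m-n}{m+n-2i}\big)^2\le 1-\big(\tfrac{m-n}{m+n}\big)^2\le 1-\tfrac{\epsilon^2}{9}$, valid because $\tfrac{m-n}{m+n}\ge\tfrac{\epsilon}{2+\epsilon}\ge\tfrac{\epsilon}{3}$ for $m\ge(1+\epsilon)n$ and $\epsilon\le1$. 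The leftover ``odd/even'' factor $\prod_{i=0}^{k-1}\frac{m+n-2i}{m+n-2i-1}\le\sqrt{\tfrac{m+n}{m+n-2k}}\le\sqrt{(2+\epsilon)/\epsilon}$ is a constant, using $m+n-2k\ge m-n\ge\epsilon n$ (here $k\le n$ and $n\ge4\epsilon^{-1}$ are used). Combining these with $C_k\le 4^k/(\sqrt\pi\,k^{3/2})$ and $\binom{2k}{k}\le 4^k/\sqrt{\pi k}$, the powers of $4$ cancel, and noting $\frac{m-n}{m+n-2k}\le1$ in Part \ref{part: dist H0}, I obtain $\mathbb P(\hat H_i=2k)\le \frac{C}{k^{3/2}}(1-\tfrac{\epsilon^2}{9})^k$ for $i\ge1$ and $\mathbb P(\hat H_0=2k)\le \frac{C}{\sqrt k}(1-\tfrac{\epsilon^2}{9})^k$, which is stronger than the claimed bound with $\epsilon^2/18$.

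I expect the main obstacle to be making the last-passage/ladder decomposition a genuine bijection, since this is what simultaneously delivers the clean Catalan and cycle-lemma counts and explains why the marginal of $\hat H_i$ is the same for every $i\ge1$; the delicate point is the behavior at level $0$, where the first segment is an \emph{unconstrained} bridge rather than a positive excursion, which is precisely what separates Part \ref{part: dist H0} from Part \ref{part: dist Hi}. Once the exact laws are established, Part \ref{part: bound dist Hi} is careful but routine; there the only subtlety is bounding the product of odd/even correction factors by a constant (via the integral comparison above) rather than by the lossy per-term estimate $1+\tfrac1{\epsilon n}$, whose $k$-th power would be far too large.
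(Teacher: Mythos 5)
Your proposal is correct and follows essentially the same route as the paper's own proof: represent the left-to-right order of drivers and riders as a uniformly random lattice path, decompose it at the last-exit times into an initial unconstrained bridge plus, for each level, an up-step followed by a Dyck excursion, count via Catalan numbers and the cycle (ballot) lemma together with the insert-at-the-last-visit bijection, and finally bound the ratio of binomial coefficients as a product of per-step factors each at most $\tfrac14\bigl(1-\Theta(\epsilon^2)\bigr)$. One remark: the expression $\binom{m+n-2k-1}{n-k}$ you obtain in Part 2 is exactly what the paper's own proof derives (the statement's $\binom{m+n-2k}{n-k}$ is larger by the factor $\tfrac{m+n-2k}{m-k}$ and appears to be a typo), so the discrepancy is harmless for the upper bounds in Part 3, and your constant $1-\epsilon^2/9$ with the separately controlled odd/even correction factor is a mild sharpening of the paper's $1-\epsilon^2/18$.
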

\begin{proof} 
We begin by proving Part \ref{part: dist H0}.
The first step is to represents the hops on the walk $W$ with a lattice path, $P$, from $(0,0)$ to $(n+m,m-n)$.  Starting from $P=(0,0)$ on the lattice and  $x=0$ on the interval $[0,\ell]$, keep increasing $x$ and updating $P$ on the lattice as follows. Whenever there is a hop up ($x\in D$) add $(1,1)$ to the current position $P$, and whenever there is a hop down ($x\in R$) add $(1,-1)$ to $P$ (see  \autoref{fig: W} and \autoref{fig: lattice path}). Note that if there are $h_u$  hop ups and $h_d$ hop downs   in $[0,h]$ then when $x=h$ the lattice path is at
$P=(h_u+h_d,h_u-h_d)$. In particular,
at the end of the process  $x=\ell$ and $P=(n+m,m-n)$. 


\begin{figure}[!tbp]
\centering
    \begin{subfigure}[b]{0.4\textwidth}
  \begin{tikzpicture}[
    x=5.5mm,
    y=5.5mm,
  ]

    \draw[ultra thin,white]
      \foreach \x in {0, ..., 10} {
        (\x, -2) -- (\x, 6)
      }
      \foreach \y in {-2, ..., 5} {
        (0, \y) -- (11, \y)
      }
    ;  

    \begin{scope}[
      semithick,
      ->,
      >={Stealth[]},
    ]
      \draw (0, 0) -- (11.5, 0);
    \end{scope}

    Random walk

        \draw [ thick] (0,0) -| (.5,1) -| (2,0) -| (2.8,-1) -| (4,0) -| (4.8,1)-| (5.5,2)-| (7,3)-| (7.7,4)-| (9,3)-| (10,2)-|  (11,2);

    \tikzstyle{dot2} = [fill=white, draw];
    \tikzstyle{dot1} = [fill=darkgray, draw];
     \foreach \x / \y in {{.5/1},{2/2}, {2.8 /2}, {4/1}, {4.8/1}, {5.5/1}, {7/1}, {7.7/1}, {9/2},{10/2}} {
        \path[dot\y] (\x, 0) circle[radius=1.3mm];
    }
    
	\node[](e1) at (2, -1.5) {$\tau_0$};
\draw[->] (2,-.3) -- +(e1);
	\node[](e1) at (4.8, -1.5) {$\gamma_0$};
\draw[->] (4.8,-.3) -- +(e1);
  \end{tikzpicture}
    \caption{ The walk $W$ hops up with any driver (dark node) and  hops down with any rider (white node).  Also, compare the last exit time $\gamma_0$ with the first return time $\tau_0$ (see \eqref{eq: gamma} and \eqref{eq: defi tau}). }
  \label{fig: tau vs gamma}
    \label{fig: W}
  \end{subfigure}
  \hspace{1.cm}
  \begin{subfigure}[b]{0.4\textwidth}
  \begin{tikzpicture}[
    x=5.5mm,
    y=5.5mm,
  ]

    \draw[ultra thin]
      \foreach \x in {0, ..., 10} {
        (\x, -2) -- (\x, 6)
      }
      \foreach \y in {-1, ..., 5} {
        (0, \y) -- (11, \y)
      }
    ;

    ;  
    \begin{scope}[
      semithick,
      ->,
      >={Stealth[]},
    ]
      \draw (0, 0) -- (11.5, 0);
    \end{scope}

    Random walk
    \draw[thick]
      (0, 0)
      \foreach \x in {1,-1,-1,1,1,1,1,1,-1,-1,1} {
        -- ++(1,\x)
      }
    ;  

    \tikzstyle{dot2} = [fill=white, draw];
    \tikzstyle{dot1} = [fill=darkgray, draw];
     \foreach \x / \y in {{1/1},{2/2}, {3 /2}, {4/1}, {5/1}, {6/1}, {7/1}, {8/1}, {9/2},{10/2}} {
        \path[dot\y] (\x, 0) circle[radius=1.3mm];
    }

  \end{tikzpicture}
    \caption{The lattice path $P$  moves $\nearrow$ with any driver  (dark node) and it moves $\searrow$ with any rider (white node). Note the lattice path is identified with ordering of riders and drivers and not their exact location.
    }
  \label{fig: lattice path}
  \end{subfigure}
  \end{figure}
By counting the corresponding lattice paths, we find the distribution of  $\hat H_0$. Observe that there are equal number of drivers and riders in $[0,\gamma_0]$. So, if $\hat H_0=2k$, then  the first part of $P$ is from $(0,0)$ to $(2k,0)$, and as a result, it can be formed in ${2k\choose k}$ different ways. Since for all $x\geq \gamma_0$ we have $W(x)-W(\gamma_0)>0$,
 the second part of the lattice path, which is from $(2k,0)$ to $(m+n,m-n)$, should not touch level zero, where by level $i$ we mean the line $y=i$ on the plane.
By  \citealt{catalan_general},  the number of such lattice paths from $(0,0)$ to $(a+b,a-b)$ (with $a\geq b$)  is 
\begin{equation}\label{eq: general catalan}
    \frac{a-b}{a+b}{a+b\choose b}.
\end{equation}
In our case $a=m-k$ and $b=n-k$, and there are  $\frac{m-n}{m+n-2k}{m+n-2k\choose n-k}$ ways to complete the path from $(2k,0)$ to $(m+n,m-n)$ such that it never touches level zero.  Since the total number of paths is ${m+n\choose n}$,  we get
\[\mathbb{P}(\hat H_0=2k)=\frac{m-n}{m+n-2k}{m+n-2k\choose n-k}\frac{{2k\choose k}}{{m+n\choose n}}.\]

Likewise, we use a path counting argument to find the probability of the event $\hat H_i=2k$  for Part \ref{part: dist Hi}. Divide the lattice path $P$ into three parts corresponding to the intervals $[0,\gamma_{i-1}]$, $[\gamma_{i-1},\gamma_{i}]$ and $[\gamma_i,\ell]$.
The first path, $P_0$, starts at $(0,0)$ and ends at $o_{i-1}=(H_{\gamma_{i-1}},i-1)$. The reason is that the interval $[0,\gamma_{i-1}]$ contains $H_{\gamma_{i-1}}$ total hops,  and
by the exit time $\gamma_{i-1}$  we see $i-1$ more hop ups than hop downs.
Recall that if $\hat H_i=2k$, then there are $k+1$ hop ups and $k$ hop downs in $[\gamma_{i-1},\gamma_{i}]$. 
Therefore, the second path, $P_1$, starts at $o_{i-1}$ and ends at $o_{i}=o_{i-1}+(2k+1,1)$.  Finally,  $P_2$ is from $o_{i}$ to $(m+n,m-n)$.
Since $\gamma_i$ is the $i^{th}$ exit time, the lattice path $P_1$ never crosses  level $i$. So, by plugging in $a=k+1$ and $b=k$ in \eqref{eq: general catalan}, we see that the number of such paths from $o_{i-1}$ to $o_{i}$ is $\frac{{2k+1\choose k}}{k+1}$. 

Next, we contract $P_1$ and count the number of paths $P_0$ and $P_2$  together. Let $P.(o_{i-1},o_{i})$ be the lattice path  formed by  removing $P_1$  from $P$, and instead starting the path $P_2$ at the point $o_{i-1}$.
 Then $P.(o_{i-1},o_{i})$ is a path from $o=(0,0)$ to $f=(m+n-2k-1,m-n-1)$. 
Now, given any path $Q$ from $o$ to $f$, we claim that there is a unique way to insert the path $P_1$ back into $Q$ such that $P_1$ corresponds to the hops between $(i-1)^{th}$ and $i^{th}$ exit times. In fact,  find the last intersection of $Q$ with  level $i-1$, and let $Q_0$ and $Q_1$ be the part of $Q$ before and after the intersection, respectively.  Since we want $P_1$ to correspond to the hops between $(i-1)^{th}$ and $i^{th}$ exit times, it needs to appear right after $Q_1$. So, add back $P_1$ at the intersection, i.e., consider the path $P'$ that starts with $Q_0$ and then continues with $P_1$ and $Q_1$, in this order. 
Recall that $Q_1$ starts at level $i-1$ in $Q$ and it never crosses level $i-1$. Therefore, the part of $P'$ after $P_1$, never crosses level $i$. 
Hence, $P_1$ corresponds to the hops between levels $i-1$ and $i$ in $P'$. 

As a result of the above arguments, if we fix $P_1$, then there is a one to one correspondence between paths  from $(0,0)$ to $(m+n-2k-1,m-n-1)$ and paths from $(0,0)$ to $(n+m,m-n)$ such that $P_1$ appears between levels $(i-1)^{th}$ and $i^{th}$. Therefore,
\[ \mathbb{P}(H_i=2k)=\frac{{2k\choose k}}{k+1}\frac{{m+n-2k-1\choose n-k}}{{m+n\choose n}}.\]

It remains to prove Part \ref{part: bound dist Hi}.  By Stirling's approximation, there exists constants $C>C'>0$ such that for all $k$, we have $C' \frac{2^{2k}}{\sqrt k}\leq {2k\choose k}\leq C \frac{2^{2k}}{\sqrt k}$.
So, we bound the following
    \begin{align*}
   2^{2k}\frac{{m+n-2k\choose n-k}}{{m+n\choose n}}
    &=\frac{m-k}{m+n-2k}\prod_{i=1}^k(1-\frac{m-n-1}{m+n-2k+2i-1})(1+\frac{m-n}{m+n-2k+2i})\\
    &\leq\prod_{i=1}^k(1-\frac{(m-n-1)(m-n)}{2(m+n-2k+2i-1)(m+n-2k+2i)}),
    \end{align*}
where in the last inequality we used the fact that $(m-n-1)^2\geq 4(n-k+i)$. Recall that $m-n\geq \epsilon n$. Then
    for any $i$  and $\epsilon<1$, \[\frac{m-n-1}{m+n-2k+2i-1}\frac{m-n}{m+n-2k+2i}\geq\frac{(m-n)^2}{(m+n)^2}\geq\frac{\epsilon^2}{18}.\] 
    Combining this with the previous inequalities, 
    \begin{equation}\label{eq: upper bound with path counting}
         2^{2k}\frac{{m+n-2k\choose n-k}}{{m+n\choose n}}\leq (1- \frac{\epsilon^2}{18})^k. 
    \end{equation}
Therefore, by Part \ref{part: dist Hi}, for $i\geq 1$,
\[\mathbb{P}(H_i=2k)=\frac{{2k\choose k}}{k+1}\frac{{m+n-2k\choose n-k}}{{m+n\choose n}}\leq \frac{C}{\sqrt k (k+1)}(1- \frac{\epsilon^2}{18})^k,\]
and by Part \ref{part: dist H0}, for $i=0$,
\[\mathbb{P}(H_0=2k)={2k\choose k}\frac{m-n}{m+n-2k}\frac{{m+n-2k\choose n-k}}{{m+n\choose n}}\leq \frac{C}{\sqrt k }(1- \frac{\epsilon^2}{18})^k.\]
\end{proof}

We would like to point out that a weaker version of the above lemma could be proved without path counting, and instead using Chernoff bound and the negative association of a random permutation of hop ups and hop downs. 
However, the Chernoff bound argument leads to a poly-logarithmic bound (in terms of $n$) and will not  allow us  to give a constant upper bound  on the cost of omniscient.
In  \Cref{sec: proof unbalanced}, we  directly use the distribution of $\hat H_i$ as stated to give a constant bound.

Next, as a corollary of  \Cref{lm: dist hat H_i}, we show that  the number of hops in each slice is with high probability $O(\log(n))$. 
In Section \Cref{sec: greedy},  we will  use this observation  to bound the cost of the greedy algorithm.
\begin{corollary}\label{cor: max Hi}
Let $R$ and $D$ be as in  \Cref{thm:balancedvsunbalanced}, where for some $\epsilon>0$ we have $m\geq (1+\epsilon)n$. Recall the definition of $\hat H_i$ from \eqref{eq: defi hat Hi}. Then there exists some $\alpha>0$ such that
\[\mathbb{P}(\max_{0\leq i\leq m-n+1}\hat H_i\geq \alpha\log n)\leq \frac{1}{n}.\]
\end{corollary}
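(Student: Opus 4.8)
The plan is to reduce the maximum to a union bound over the $m-n+1$ slices, after first establishing a polynomial-in-$n$ tail bound for a single slice using \Cref{lm: dist hat H_i}.

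First I would fix a slice index $i \ge 1$ and write $\mathbb{P}(\hat H_i \ge \alpha \log n) = \sum_{k \ge \alpha \log n / 2} \mathbb{P}(\hat H_i = 2k)$, then substitute the bound from Part \ref{part: bound dist Hi}, namely $\mathbb{P}(\hat H_i = 2k) \le \frac{C}{k\sqrt k}\big(1 - \frac{\epsilon^2}{18}\big)^k$. Since the prefactor $\frac{1}{k\sqrt k} \le 1$ for $k \ge 1$, this sum is dominated by a geometric series with ratio $1 - \frac{\epsilon^2}{18}$, so
\[
\mathbb{P}(\hat H_i \ge \alpha \log n) \;\le\; \frac{18C}{\epsilon^2}\Big(1-\tfrac{\epsilon^2}{18}\Big)^{\alpha \log n /2}.
\]
The slice $i=0$ is handled identically with the slightly weaker bound $\frac{C}{\sqrt k}\big(1-\frac{\epsilon^2}{18}\big)^k$ from the same lemma, which yields the same order. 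Writing $\beta := -\log\big(1 - \frac{\epsilon^2}{18}\big) > 0$, the geometric tail $\big(1-\frac{\epsilon^2}{18}\big)^{\alpha \log n /2}$ equals $n^{-\alpha\beta/2}$, so each individual slice exceeds $\alpha \log n$ with probability at most $C_\epsilon\, n^{-\alpha\beta/2}$ for a constant $C_\epsilon$.

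Next I would apply a union bound over all $m-n+2$ slice indices $i \in \{0,1,\ldots,m-n+1\}$:
\[
\mathbb{P}\Big(\max_{0\le i \le m-n+1}\hat H_i \ge \alpha \log n\Big) \;\le\; (m-n+2)\, C_\epsilon\, n^{-\alpha\beta/2}.
\]
Finally, choosing $\alpha = \alpha(\epsilon)$ large enough that $\alpha\beta/2 \ge 2$ bounds the right-hand side by $C_\epsilon (m-n+2) n^{-2}$, which is at most $1/n$ for large $n$ in the regime $m = O(n)$ that is used in the analysis of the greedy algorithm. Since $\beta = \Theta(\epsilon^2)$ for small $\epsilon$, the required threshold constant scales like $\alpha \sim 1/\epsilon^2$.

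All of the steps are routine; the only point requiring care is the dependence of the union bound on the slice count $m-n+1$. Because each slice's failure probability decays only polynomially in $n$, with an exponent proportional to $\alpha\epsilon^2$, one must take $\alpha$ large relative to $1/\epsilon^2$ so that the per-slice tail beats the number of slices. This is the main quantitative obstacle, and it is precisely why \Cref{lm: dist hat H_i} was established with a geometric (rather than merely polylogarithmic) tail: a Chernoff-type bound would not give a polynomial per-slice tail strong enough to survive the union bound at the $O(\log n)$ threshold.
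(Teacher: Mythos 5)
Your proposal is correct and follows essentially the same route as the paper: bound the per-slice tail $\mathbb{P}(\hat H_i\geq \alpha\log n)$ using the geometric decay from Part \ref{part: bound dist Hi} of \Cref{lm: dist hat H_i}, choose $\alpha$ on the order of $1/\epsilon^2$ (the paper takes $\alpha=\lceil -3/\ln(1-\epsilon^2/18)\rceil$, bounding each term by $n^{-3}$ and summing at most $n$ terms rather than summing the geometric series in closed form), and finish with a union bound over the slices. Your closing remark about needing the per-slice exponent to beat the slice count $m-n+1$ is a point the paper glosses over (it, too, implicitly uses $m=O(n)$ at the union-bound step), so your version is, if anything, slightly more careful.
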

\begin{proof}
By Part \ref{part: bound dist Hi} of  \Cref{lm: dist hat H_i} there exists some $C>0$ such that for any $i\geq 0$
\[   \mathbb{P}(\hat H_i\geq \alpha\log n)
    \leq  \sum_{k\geq \alpha\log n}\frac{C}{\sqrt k} (1-\frac{\epsilon^2}{18})^{k}.\]
Let $\alpha=\lceil\frac{-3}{\ln(1-\epsilon^2/18)}\rceil$,  then
\begin{align*}
    \mathbb{P}(\hat H_i\geq \alpha\log n)
   & \leq  \sum_{k\geq \alpha\log n}\frac{C}{\sqrt k} e^{k\ln(1-\epsilon^2/18)}\\
   &\leq \sum_{k\geq \alpha\log n}\frac{1}{\sqrt k} e^{-3\ln(n)}\leq  \sum_{k\geq \alpha\log n} n^{-3}\leq n^{-2}.
\end{align*}
Now by a union bound we get the result.
\end{proof}
 
Finally, we want to point out that the walk representation  has been  studied before by  \citealt{holroyd2009geometric,holroyd2020minimal} for matching Poisson point processes in metric spaces, however, their result is on Poisson processes with the same intensity (equivalent to balanced markets). Our belief is that this technique has its novelty that has not been used to a great extent in the literature.


\subsection{Analysis of the omniscient algorithm}\label{sec: proof unbalanced}
In order to bound the cost of omniscient, first, recall the definition of $\gamma_i$ from \eqref{eq: gamma} and the fact that each slice $[\gamma_i,\gamma_{i+1}]$ contains more drivers than riders. A simple upper bound on the cost of omniscient can be obtained by pairing all riders and driver within the same slice.
Define
  \begin{align}\label{eq: defi hat tau}
 \begin{array}{lr}
\hat\tau_0=\gamma_0,&\\
 \hat\tau_i=\gamma_i-\gamma_{i-1},&\text{ for }0\leq i\leq m-n+1.
\end{array}
\end{align}
Then the cost of matching riders and drivers in the same slice is at most $\sum_{i=0}^{m-n+1}\hat H_i\hat \tau_i$, which we will prove in the next result is bounded by a constant factor of the length of the interval.

\begin{lemma}\label{lm: H tau}
Let $R$ and $D$ be as in  \Cref{thm:balancedvsunbalanced}.
Let $\hat\tau_i$  and $\hat H_i$ be defined as in \eqref{eq: defi hat tau} and \eqref{eq: defi hat Hi}. If $m\geq(1+\epsilon)n$ for some $\epsilon>0$, then  there exists $C_\epsilon>0$ such that 
\[\mathbb{E}\Big[\sum_{i=0}^{m-n+1}\hat H_i\hat\tau_i\Big]\leq C_{\epsilon}\ell.\]
\end{lemma}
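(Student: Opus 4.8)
The plan is to decouple the combinatorial structure of the slices (which determines the $\hat H_i$) from their metric lengths $\hat\tau_i$, and then bound the two factors separately. The probabilistic fact I would lean on is that for $N = n+m$ i.i.d.\ uniform points on $[0,\ell]$, the sorted \emph{labels} (the arrangement of $n$ riders and $m$ drivers read off in increasing order of position) form a uniformly random word, and this word is \emph{independent} of the order statistics $U_{(1)} < \cdots < U_{(N)}$. Since the walk $W$, the exit times $\gamma_i$, and hence every $\hat H_i$ depend only on the label word, conditioning on the labels makes each $\hat H_i$ deterministic while leaving the order statistics distributed as ordinary uniform spacings.

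First I would record that, given the labels, the boundaries of the slices sit at fixed order-statistic indices: writing $a_i = H_{\gamma_i}$ for the number of hops in $[0,\gamma_i]$, we have $\gamma_i = U_{(a_i)}$ together with $a_i - a_{i-1} = \hat H_i + 1$ for $i\ge 1$ (and $a_0 = \hat H_0$). Thus $\hat\tau_i = U_{(a_i)} - U_{(a_{i-1})}$ is a sum of $\hat H_i + 1$ consecutive uniform spacings, and since each of the $N+1$ spacings of $N$ uniform points on $[0,\ell]$ has mean $\ell/(N+1)$,
\[
\mathbb{E}[\hat\tau_i \mid \text{labels}] = \frac{(\hat H_i + 1)\,\ell}{n+m+1}
\]
(with $\hat H_0$ in place of $\hat H_0 + 1$ for the first slice). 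Taking the tower expectation gives $\mathbb{E}[\hat H_i\hat\tau_i] = \frac{\ell}{n+m+1}\,\mathbb{E}[\hat H_i(\hat H_i+1)]$.

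Next I would bound this second moment uniformly in $i$ using Part \ref{part: bound dist Hi} of \Cref{lm: dist hat H_i}. The tail $\mathbb{P}(\hat H_i = 2k)\le \frac{C}{k\sqrt k}(1-\frac{\epsilon^2}{18})^k$ for $i\ge 1$ (and $\frac{C}{\sqrt k}(1-\frac{\epsilon^2}{18})^k$ for $i=0$) yields
\[
\mathbb{E}[\hat H_i(\hat H_i+1)] \le C\sum_{k\ge 1}\frac{2k(2k+1)}{k\sqrt k}\Big(1-\tfrac{\epsilon^2}{18}\Big)^k =: B_\epsilon < \infty,
\]
a convergent series because geometric decay dominates the polynomial factor; the $i=0$ term is finite by the identical estimate. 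Summing over the at most $m-n+2$ slices and using $\frac{m-n+2}{n+m+1}\le 1$ gives
\[
\mathbb{E}\Big[\sum_{i=0}^{m-n+1}\hat H_i\hat\tau_i\Big] \le \frac{m-n+2}{n+m+1}\,B_\epsilon\,\ell \le B_\epsilon\,\ell,
\]
which is the claim with $C_\epsilon = B_\epsilon$.

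The step I expect to be the main obstacle is the first one: making precise the independence between the label word and the order statistics, and verifying that the label-determined slice boundaries really span exactly $\hat H_i + 1$ spacings, so that the conditional-mean computation is legitimate. Once that decoupling is in place, the remainder is a mechanical combination of the previously established tail bound with the elementary observation that the number of slices $m-n+2$ is dominated by the normalizer $n+m+1$.
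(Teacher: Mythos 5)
Your proposal is correct and follows essentially the same route as the paper: the paper likewise exploits that the spacing lengths are independent of the hop labels (phrased as ``the length of a segment between consecutive hops does not depend on whether it is a hop up or a hop down''), conditions to get $\mathbb{E}[\hat\tau_i\mid \hat H_i=2k]=\frac{(2k+1)\ell}{m+n+1}$, and then combines this with the tail bound of \Cref{lm: dist hat H_i}, Part \ref{part: bound dist Hi}, and the ratio $\frac{m-n+2}{n+m+1}\leq 1$. The decoupling step you flag as the main obstacle is exactly the standard independence of the rank word and the order statistics of i.i.d.\ uniforms, which the paper also invokes (somewhat more tersely), so there is no gap.
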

\begin{proof}

Given $j\geq i$, it is a well-known fact that the distance between $i^{th}$ and $j^{th}$ order statistics of $n+m$ independent uniform samples  on $[0,1]$ is distributed as $Beta(j-i,n+m+1-(j-i))$ with mean $\frac{j-i}{m+n+1}$ (e.g., see \citealt{pitman1999probability}).
Note that $\hat\tau_i$ is the distance between $H_{\gamma_{i-1}}^{th}$ and $H_{\gamma_i}^{th}$ order statistics.
 Since the length of a segment between consecutive hops does not depend on whether it is a hop up or a hop down, we can condition on the number of hops in $[\gamma_{i-1},\gamma_{i}]$ to find the expected length of $\hat\tau_i$. Therefore,
\[\mathbb{E}[\hat\tau_i|\hat H_{i}=2k]=\mathbb{E}[\hat\tau_i|H_{\gamma_i}-H_{\gamma_{i-1}}=2k+1]=\frac{(2k+1)\ell}{m+n+1},\]
where the factor $\ell$ comes from scaling the interval $[0,1]$ to $[0,\ell]$.

With this observation and Part \ref{part: bound dist Hi} of  \Cref{lm: dist hat H_i}, for $i\geq 1$ we get,
\begin{align*}
    \mathbb{E}[\hat H_i\hat\tau_i]&=\sum_{k=0}^n2k \mathbb{P}(\hat H_i=2k) \mathbb{E}[\tau_i|\hat H_i=2k]\\
    &=\sum_{k=0}^n\frac{2k(2k+1)\ell}{n+m+1}\mathbb{P}(\hat H_i=2k)\\
    &\leq\frac{C\ell}{n+m+1}\sum_{k=0}^n\sqrt{k}(1-\frac{\epsilon^2}{18})^k.
\end{align*}
 Similarly, for $i=0$,
        \begin{align*}
         \mathbb{E}[\hat H_0\hat\tau_0]
    &\leq\frac{C\ell}{n+m+1}\sum_{k=0}^nk(1-\frac{\epsilon^2}{18})^k.
    \end{align*}
For any $q\in(0,1)$, it is known that $\sum_{k=1}^\infty k(1-q)^k=\frac{1-q}{q^2}$. Therefore,  for some $C'>0$ and all $i\geq 0$ we get
    \begin{align*}
         \mathbb{E}[\hat H_i\hat\tau_i]&\leq\frac{C\ell}{n+m+1}\sum_{k=0}^nk(1-\frac{\epsilon^2}{18})^k\\
         &\leq \frac{\epsilon^{-4}C'\ell}{n+m+1}.
    \end{align*}
    As a result,
    \[\sum_{i=0}^{m-n+1}\mathbb{E}[\hat H_i\hat\tau_i]\leq C'\epsilon^{-4}\ell.\]
\end{proof}

Now, we are ready to finish the proof of Part \ref{thm:unabalancedomn} of  \Cref{thm:balancedvsunbalanced}.
\begin{proof}[Proof of   \Cref{thm:balancedvsunbalanced}, Part \ref{thm:unabalancedomn}]
Given an instance of $R$ and $D$, construct the walk $W$ as in \eqref{eq: walk}. Recall the definition of $\gamma_i$ from \eqref{eq: gamma} and slices $[0,\gamma_0],[\gamma_0,\gamma_1],\ldots,[\gamma_{m-n-1},\gamma_{m-n}]$. 
By definition,  there are $1+\hat H_i/2$ drivers and $\hat H_i/2$ riders  in the slice $[\gamma_{i-1},\gamma_{i}]$ for $i\geq 1$. Also, there are equal number of riders and drivers in $[0,\gamma_0]$. 
Therefore, there exists a matching $M_\gamma$ such that for each pair $(r,d)\in M_\gamma$ both $r$ and $d$ appear in the same slice. Then
\[cost(M_\gamma)\leq \sum_{i=0}^{m-n} \hat H_i(\gamma_i-\gamma_{i-1}).\]
As a result, if $OMN$ is the minimum cost matching,
\begin{align*}
    \mathbb{E}[cost(OMN)]&\leq\mathbb{E} [cost(M_\gamma)]\\
    &\leq \mathbb{E}[\sum_{i=0}^{m-n} \hat H_i(\gamma_i-\gamma_{i-1})].
\end{align*}
Now,  by  \Cref{lm: H tau} we get the desired result.
\end{proof}

\subsection{Analysis of greedy}\label{sec: greedy}
We continue with studying the unbalanced market, where the position of riders are revealed in an online fashion. We formalize the online matching model and then  give an upper bound on the cost of the greedy algorithm.

In the adversarial online matching model, initially, the algorithm has access to the set of drivers $D$. At each step, an adversary, with full information of the state of the algorithm, chooses a rider $r\in R$ and passes its location to the algorithm. The algorithm has one chance to match $r$, irrevocably. 
The most straightforward way to match riders, as stated in Algorithm \ref{alg: greedy}, is greedy: match the arriving rider to the closest unmatched driver.

\begin{algorithm}
	\SetAlgoNoLine
\KwIn{ $D=\{d_1,\ldots, d_m\}$.
}
$M=\emptyset$

\While{a new rider $r$ arrives}{
     $d^*$= the closest unmatched driver to $r$

     $M=M\cup(r,d^*)$
  }

 return $M$
 \caption{Greedy}
 \label{alg: greedy}
\end{algorithm}

The rest of the section is on the proof of  \Cref{thm:unbalancedgreedy}. Our proof is based on showing a stronger result:  with high probability, greedy 
matches each rider to one of the $O(\log^2(n))$ closest drivers. 
The following observation shows that such a driver will not be too far.

\begin{proposition}\label{prop: max dist}
Given a set of uniform random points $d_{(1)}\leq d_{(2)}\leq \ldots \leq d_{(m)}$ on $[0,\ell]$, then
\[\mathbb{P}(\max_{1\leq i\leq m}(d_{(i)}-d_{(i-1)})\geq 2\frac{\ell\log(m)}{m})\leq \frac{1}{m}.\]
\end{proposition}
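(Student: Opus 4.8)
The plan is to reduce the statement to the standard description of the spacings of uniform order statistics and then apply a union bound directly over the $m$ gaps. First I would rescale so that $\ell = 1$, which is harmless since both the gaps and the threshold scale linearly in $\ell$. Setting $d_{(0)} = 0$ (the left endpoint), the quantities $d_{(i)} - d_{(i-1)}$ for $1 \le i \le m$, together with the right-end gap $1 - d_{(m)}$, are exactly the $m+1$ spacings determined by $m$ i.i.d.\ uniform points on $[0,1]$.

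The key ingredient is that these $m+1$ spacings are exchangeable and jointly uniform on the simplex $\{g_i \ge 0,\ \sum_i g_i = 1\}$; consequently each spacing has marginal distribution $\mathrm{Beta}(1,m)$, so that $\mathbb{P}(d_{(i)} - d_{(i-1)} \ge t) = (1-t)^m$ for every $i$ and every $t \in [0,1]$. This can be obtained either from the Dirichlet representation of uniform spacings, or, for the left-end gap $d_{(1)}$, directly as the minimum of $m$ uniform variables, with exchangeability then supplying the remaining gaps. With this marginal in hand I would union-bound over the $m$ gaps:
\[
\mathbb{P}\Big(\max_{1\le i\le m}(d_{(i)} - d_{(i-1)}) \ge t\Big) \le \sum_{i=1}^m \mathbb{P}\big(d_{(i)} - d_{(i-1)} \ge t\big) = m(1-t)^m.
\]
Plugging in $t = 2\log(m)/m$ and using $1 - t \le e^{-t}$ gives $(1-t)^m \le e^{-2\log m} = m^{-2}$, so the right-hand side is at most $m \cdot m^{-2} = 1/m$, as claimed. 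Undoing the rescaling replaces $t$ by $2\ell\log(m)/m$.

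The only real subtlety --- and the step I would be most careful about --- is choosing the right form of the union bound. A tempting alternative is to cover $[0,\ell]$ by a grid of cells of width $\asymp \ell\log(m)/m$ and argue that a large gap forces an empty cell; but the number of cells is $\asymp m/\log m$, and that argument yields only a bound of order $1/\log m$, which is too weak. Union-bounding over the \emph{actual} $m$ spacings using their exact $\mathrm{Beta}(1,m)$ marginal is precisely what recovers the extra $\log m$ factor and lands exactly at $1/m$; no other step requires more than the elementary inequality $1-t \le e^{-t}$.
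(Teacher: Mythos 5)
Your argument is correct and is essentially identical to the paper's own proof: rescale to $\ell=1$, use the $\mathrm{Beta}(1,m)$ marginal of each uniform spacing to get $\mathbb{P}(d_{(i)}-d_{(i-1)}\geq t)\leq(1-t)^m$, and union-bound over the gaps with $t=2\log(m)/m$. The only cosmetic difference is that you spell out the final step via $1-t\leq e^{-t}$, which the paper leaves implicit.
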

\begin{proof}
We prove it for $\ell=1$, then scaling all the points by $\ell$ gives the desired statement. 
For simplicity, let $d_{(0)}=0$ and $d_{(m+1)}=1$. For order statistics of $m$ standard uniform random points, it is known that $d_{(i)}-d_{(i-1)}\sim Beta(1,m)$ (see e.g., \citealt{pitman1999probability}). Then for each $m+1\geq i\geq 1$
\[\mathbb{P}(d_{(i)}-d_{(i-1)}\geq t)\leq (1-t)^m.\]
Therefore, by a union bound
\[\mathbb{P}(\max_i(d_{(i)}-d_{(i-1)})\geq 2\frac{\log(m)}{m})\leq m (1-2\frac{\log(m)}{m})^m\leq \frac{1}{m}.\]
\end{proof}

\begin{proof}[Proof of  \Cref{thm:unbalancedgreedy}] 
Recall the definition of $\gamma_i$ from \eqref{eq: gamma}, which partition the interval $[0,\ell]$ into slices  $\Gamma=\{[0,\gamma_0],[\gamma_0,\gamma_1],\ldots,[\gamma_{m-n-1},\gamma_{m-n}]\}$. The main idea of the proof is to show that any rider is matched to a driver within $O(\log n)$ slices away from it.

 By  \Cref{cor: max Hi}, we know there exists some $\alpha>0$ such that 
$\mathbb{P}(\max_{0\leq i\leq m-n+1}\hat H_i\geq \alpha\log n)\leq \frac{1}{n},$
where $\hat H_i$ was defined in \eqref{eq: defi hat Hi}.
Define the event
\[\mathcal{A}_1=\{R\cup D: \max_{0\leq i\leq m-n+1}\hat H_i\leq \alpha\log n \},\]
which indicates that each slice contains at most $\alpha\log(n)$ drivers. Also, let
\[\mathcal{A}_2=\{ D: \max_{1\leq i\leq m}(d_{(i)}-d_{(i-1)})\leq \frac{2\ell\log(n)}{n}n \},\]
where $d_{(1)}\leq \ldots \leq d_{(m)}$ is the order statistic of the set of drivers $D$. 
Let $M_G$ be the matching returned by greedy.
We claim that it is sufficient to only consider instances of $R\cup D$ satisfying both $\mathcal{A}_1$ and $\mathcal{A}_2$, and prove that
\begin{equation}\label{eq: int-step-greedy}
    |\{S\in \Gamma: S\cap I_{(r,d)}\neq \emptyset\}|\leq 2\alpha\log(n)+4, \quad \text{ for each pair $(r,d)\in M_G$,}
\end{equation} 
where $I_{(r,d)}=\big(\min(r,d),\max(r,d)\big)$ is the open sub-interval covered by the  pair $(r,d)$.

To see why proving \eqref{eq: int-step-greedy} is enough, note that  $\mathcal{A}_1$  and  \eqref{eq: int-step-greedy} imply that each rider $r\in R$ is matched to  one of the $\alpha \log(n)(2\alpha\log(n)+4)$ closest drivers to it. Also, since $\mathcal{A}_2$ holds,  we get the following bound for any matched pair $(r,d)\in M_G$,
\[|d-r|\leq 4\alpha\ell(\alpha \log_n+2) \frac{\log^2(n)}{n}.\]
Therefore, when $\mathcal{A}_1$ and $\mathcal{A}_2$ hold, there exists a constant $\alpha'>0$ such that the size of the matching is bounded by $\alpha'\ell\log^3(n)$.
Now,  by  \Cref{prop: max dist} and  \Cref{cor: max Hi}, we know   both events  $\mathcal{A}_1$ and $\mathcal{A}_2$ happen with probability at least $1-\frac{2}{n}$. Using this and the trivial upper bound of $\ell n$ on the size of matching, we get 
\begin{align*}
    \mathbb{E}[cost(M_G)]&\leq \mathbb{E}[cost(M_G)\mid \mathcal{A}_1,\mathcal{A}_2]\mathbb{P}(\mathcal{A}_1,\mathcal{A}_2)+\ell n(1-\mathbb{P}(\mathcal{A}_1,\mathcal{A}_2))\\
    &\leq \alpha'\ell\log^3(n)+2\ell, 
\end{align*}
which proves the statement of the theorem.

It remains to prove \eqref{eq: int-step-greedy}, when both events $\mathcal{A}_1$ and $\mathcal{A}_2$ hold.
Start with $m+1$ components $C=\{[0,d_{(1)}], [d_{(1)},d_{(2)}], \ldots, [d_{(m)},\ell]\}$. 
Consider the following procedure to update $C$ with a run of the algorithm.
As greedy matches a rider $r$ to a driver $d$, update $C$ by merging all the components that have an overlap  with $I_{(r,d)}$. Note that  greedy does not leave an  unmatched driver in $I_{(r,d)}$. So, at each step we know that in each component of $C$ all the drivers except possibly the left-most and the right-most ones are matched to some rider.
Let $C_f$ be the final set after greedy stops (see \autoref{fig: intervals greedy}). 

Note that for any pair $(r,d)\in M_G$, $I_{(r,d)}$ is a subset of some component in $C_f$. So in order to prove \eqref{eq: int-step-greedy},  it is enough to show that each component of $C_f$ has overlap with at most $2\alpha\log(n)+4$ slices. 

Let $K=\lceil2\alpha\log(n)+5\rceil$.
Assume to the contrary that there exists $I\in C_f$ and some $i\geq 0$ such  that $I$ overlaps with $K$ slices
$[\gamma_{i},\gamma_{i+1}],[\gamma_{i+1},\gamma_{i+2}],
\ldots, [\gamma_{i+K-1},\gamma_{i+K}]$. As we noted earlier, all components in $C_f$ must have at most two unmatched drivers. But we show this is not possible for $I$. 
By construction of $C_f$, the component $I$  contains the middle slices $[\gamma_{i+j},\gamma_{i+j+1}]$ for all $1\leq j\leq K-2$,
and it may partially overlaps with the left-most and the right-most slices, $[\gamma_{i},\gamma_{i+1}]$ and $[\gamma_{i+K-1},\gamma_{i+K}]$ without containing them.
By the definition of $\gamma_j$, the number of drivers  in each slice is one more than the number of riders (except possibly the first slice). So, in total, there are at least $K-2$ additional drivers in the middle slices of $I$. Since $\mathcal{A}_1$ holds, there are at most $2\alpha\log(n)$ riders in the left-most and the right-most slices. So, 
\begin{align*}
    |\{D\cap I\}|-|\{R\cap I\}|&\geq 
    -|\{R\cap[\gamma_{i},\gamma_{i+1}]\}|\\
    &\qquad+\sum_{j=1}^K  \Big(|\{D\cap[\gamma_{i+j},\gamma_{i+j+1}]\}|-|\{R\cap[\gamma_{i+j},\gamma_{i+j+1}]\}|
    \Big)\\
    &\qquad\qquad\quad-|\{R\cap[\gamma_{i+K},\gamma_{i+K+1}]\}|\\
    &\geq -2\alpha\log(n)+K\\
    &\geq 3,
\end{align*}
which implies that there must be at least 3 unmatched drivers in $I$, a contradiction. 
Therefore, each $I\in C_f$ has overlap with  at most $2\alpha\log(n)+4$ slices, and our claim \eqref{eq: int-step-greedy} is proved.

\end{proof}

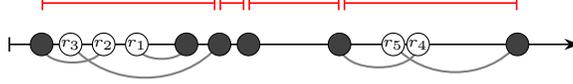
\begin{figure}[!tbp]
\centering
  \begin{tikzpicture}[
    x=5.5mm,
    y=5.5mm,
  ]

    \begin{scope}[
      semithick,
      |->,
      >={Stealth[]},
    ]
      \draw (-.3, 0) -- (13.5, 0);
    \end{scope}

    \begin{scope}[
      semithick, red,
      |-|,
      >={Stealth[]},
    ]
      \draw (.5, 1) -- (4.7, 1);
      \draw (4.8, 1) -- (5.4, 1);
      \draw (5.5, 1) -- (7.7, 1);
      \draw (7.8, 1) -- (12, 1);
    \end{scope}

    \draw [-,thick, gray] (.5,0) to [out=270,in=270] (2,0);
    \draw [-,thick, gray] (2.8,0) to [out=270,in=270] (4,0);
    \draw [-,thick, gray] (1.2,0) to [out=310,in=230] (4.8,0);
    
    \draw [-,thick, gray] (9,0) to [out=310,in=230] (12,0);
    \draw [-,thick, gray] (7.7,0) to [out=270,in=270] (9.6,0);

    \tikzstyle{dot2} = [fill=white, draw];
    \tikzstyle{dot1} = [fill=darkgray, draw];
     \foreach \x / \y in {{.5/1},{1.2/2},{2/2}, {2.8 /2}, {4/1}, {4.8/1}, {5.5/1},{7.7/1}, {9/2},{9.6/2},{12/1}} {
        \path[dot\y] (\x, 0) circle[radius=1.5mm];
    }
    \node[](e1) at (1.2, 0) {\tiny $r_3$};
    \node[](e1) at (2, 0) {\tiny $r_2$};
    \node[](e1) at (2.8, 0) {\tiny $r_1$};
    
    \node[](e1) at (9, 0) {\tiny $r_5$};
    \node[](e1) at (9.6, 0) {\tiny $r_4$};
  \end{tikzpicture}
\caption{An instance of riders (white nodes) and drivers (dark nodes), where the riders arrive in the order $r_1, r_2,\ldots, r_5$. Compare the matching returned by greedy (curved lines) and the components of $C_f$ (red intervals), as constructed in the proof of  \Cref{thm:unbalancedgreedy}.
}
\label{fig: intervals greedy}
  \end{figure}

\section{The Balanced Market}\label{sec: balanced}

We proceed to analyze the balanced market, by taking a closer look on the combinatorial structure of the spatial  matchings in  \Cref{sec: structure matching}. We show that there exists an optimal matching such that the interval between any matched pairs $r$ and $d$ contains equal number of riders and drivers. 
As a consequence of this observation, lower  bounding on the cost of a pair reduces to  lower bounding the first return time of the walk $W$, which was defined in \eqref{eq: walk}.
 \Cref{sec: RW balanced} studies this idea and analyzes the walk when $m-n$ is sub-linear in $n$. Combining the results,  \Cref{sec: balanced proof} is on the proof of  \Cref{thm: balanced}.

\subsection{Matching structure}\label{sec: structure matching}
Given a matching $M$, for any pair $(r,d)$ in $M$, let $I_{(r,d)}=\big(\min(r,d),\max(r,d)\big)$ be the interval covered by the pair. We say $e_1$ is oriented to the \textit{right} if $r_1<d_1$ and oriented to the \textit{left} otherwise.
Two pairs $e_1=(r_1,d_1)$ and $e_2=(r_2,d_2)$ overlap if $I_{e_1}\cap I_{e_2}\neq \emptyset$.
The following observation states that in any optimal matching all the overlapping pairs are oriented in the same direction.
 \begin{proposition}\label{prop: matching direction}
 Given sets of points $R$ and $D$ on $[0,\ell]$, let  $M$ be a minimum matching with penalties on $R$ and $D$ (a solutions of \eqref{IP with nu}).
 Let $e_1,e_2\in M$ be two overlapping pairs. Then $e_1$ is oriented to the right if and only if $e_2$ is oriented to the right.
 \end{proposition}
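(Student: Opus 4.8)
The plan is to argue by contradiction via a local \emph{swap} (exchange) argument. Suppose $e_1$ and $e_2$ overlap but are oriented in opposite directions; without loss of generality say $e_1 = (r_1,d_1)$ points right ($r_1 < d_1$) and $e_2 = (r_2,d_2)$ points left ($d_2 < r_2$). Since $e_1,e_2$ are distinct edges of a matching, the riders $r_1,r_2$ are distinct and the drivers $d_1,d_2$ are distinct. I would then form the matching $M'$ obtained from $M$ by deleting $e_1,e_2$ and inserting the two pairs $(r_1,d_2)$ and $(r_2,d_1)$. This is again a feasible $0/1$ solution of \eqref{IP with nu} (each rider and each driver still appears at most once), and crucially it matches exactly the same riders and the same drivers as $M$. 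Hence the penalty term of \eqref{IP with nu} is unchanged, and the objective changes only through the two affected distance terms.

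The heart of the argument is to show that this swap \emph{strictly} decreases the total distance, contradicting the minimality of $M$. I would extract a witness point from the overlap: since $I_{e_1}\cap I_{e_2}\neq\emptyset$ and both are \emph{open} intervals, there is a point $z$ lying strictly inside both. From $r_1 < z < d_1$ and $d_2 < z < r_2$ one reads off that $r_1$ and $d_2$ lie strictly to the left of $z$, while $d_1$ and $r_2$ lie strictly to its right. Writing $u = z - r_1$, $v = z - d_2$, $s = d_1 - z$, $t = r_2 - z$ (all strictly positive), the old distance cost is $(d_1 - r_1) + (r_2 - d_2) = (s+u) + (t+v)$, whereas the new cost is $|r_1 - d_2| + |r_2 - d_1| = |u-v| + |s-t|$. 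Because $u,v>0$ and $s,t>0$, we have $|u-v| < u+v$ and $|s-t| < s+t$, so the new cost is strictly smaller than the old. This contradiction shows that overlapping pairs cannot have opposite orientations, which is exactly the claim.

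I expect the only genuine subtlety to be \emph{strictness}, and the interior point $z$ is precisely what delivers it: because the intervals are open, $z$ is a true interior point of both, forcing all four gaps $u,v,s,t$ to be positive and hence both inequalities $|u-v|<u+v$ and $|s-t|<s+t$ to be strict. (Had one only known that the \emph{closed} intervals overlapped, a tie at a shared endpoint could make the swap merely non-worsening rather than strictly improving, so working with the open interval is important.) The remaining steps---verifying that $M'$ is feasible and leaves the set of matched riders unchanged---are routine, and since the statement is symmetric in $e_1$ and $e_2$, ruling out opposite orientations immediately yields the ``if and only if.''
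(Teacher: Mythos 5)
Your proposal is correct and follows essentially the same route as the paper: the paper's proof is exactly this swap argument (delete $e_1,e_2$, insert $(r_1,d_2)$ and $(r_2,d_1)$, and observe the cost strictly decreases), with the strict-inequality verification left as ``easy to check.'' Your use of an interior witness point $z$ of the open overlap to make the strictness explicit is a careful and accurate filling-in of that step.
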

\begin{proof} 
Assume to the contrary that, there are two overlapping pairs $e_1=(r_1,d_1)$ and $e_2=(r_2,d_2)$ in $M$ such that $e_1$ is oriented to the right and $e_2$ is oriented to the left. It is easy to check that, deleting $e_1$ and $e_2$ and adding the pairs $(r_1,d_2)$ and $(r_2,d_1)$ to $M$ results in a matching with a smaller cost.
\end{proof}

Following the terminology of 
\citealt{holroyd2020minimal}, the overlapping pairs, $e_1$ and $e_2$, are called \textit{nested} if either $I_{e_1}\subset I_{e_2}$ or $I_{e_2}\subset I_{e_1}$. Otherwise, we call the overlapping pairs \textit{entwined}.
A matching $M$ is nested, if all its overlapping pairs are nested. 
The following result shows that it is possible to make any matching nested without changing the cost. We will use this result in the proof of  \Cref{thm: balanced}, to make the optimal matching with penalties a nested optimal matching.

\begin{proposition}\label{prop: nested} 
Given $R$ and $D$, the position of riders and drivers in $[0,\ell]$, let $M$ be any matching between $R$ and $D$. Then there exists a nested matching $M'$ such that 
 $cost_{R,D}(M)=cost_{R,D}( M')$.
\end{proposition}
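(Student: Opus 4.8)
The plan is to transform $M$ into a nested matching by repeatedly \emph{uncrossing} entwined pairs: given two entwined pairs, I re-pair their four endpoints so that the two pairs become nested, keeping the total distance cost fixed and leaving every other pair untouched. Because a matching is just a (partial) bijection between the finitely many points of $R$ and $D$, only finitely many matchings are reachable this way. Hence it suffices to exhibit one local move that (a) preserves $cost_{R,D}$, (b) destroys a chosen entwinement, and (c) strictly increases a potential $\Phi$ defined on matchings. Strict monotonicity of $\Phi$ over a finite state space then forbids cycling and forces the procedure to halt, and it can only halt at a matching with \emph{no} entwined pairs, which by definition is nested.

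The core move: let $e_1=(r_1,d_1)$ and $e_2=(r_2,d_2)$ be two entwined pairs of $M$, and replace them by $(r_1,d_2)$ and $(r_2,d_1)$. By \Cref{prop: matching direction} overlapping pairs share an orientation, so I may assume both point right, and after relabeling that $r_1<r_2<d_1<d_2$ (the left-oriented case is symmetric). Entwinement fixes all the signs inside the absolute values, so
\[
|r_1-d_1|+|r_2-d_2| = (d_1-r_1)+(d_2-r_2) = (d_2-r_1)+(d_1-r_2) = |r_1-d_2|+|r_2-d_1|,
\]
and the swap leaves $cost_{R,D}$ unchanged. Moreover the new intervals satisfy $I_{(r_2,d_1)}\subset I_{(r_1,d_2)}$, so they are nested and the chosen entwinement is gone. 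Since the move re-pairs the \emph{same} four endpoints, the set of matched riders and drivers is unchanged, so any unmatched-penalty contribution to $cost_{R,D}$ is also unaffected.

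For termination I would track $\Phi(M)=\sum_{(r,d)\in M}|r-d|^2$. The move keeps the sum of the two interval lengths fixed, $(d_1-r_1)+(d_2-r_2)=(d_2-r_1)+(d_1-r_2)$, but replaces the length pair by a more spread-out one: a direct computation gives
\[
(d_2-r_1)(d_1-r_2)-(d_1-r_1)(d_2-r_2)=(d_1-d_2)(r_2-r_1)<0,
\]
so the product of the two lengths strictly decreases and, the sum being fixed, $\Phi$ strictly increases at every step. Along the sequence of matchings produced, $\Phi$ is therefore strictly increasing, so no matching recurs; as there are only finitely many matchings between $R$ and $D$, the sequence is finite and terminates at some $M'$. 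At termination no entwined pair remains, i.e.\ $M'$ is nested, and $cost_{R,D}(M')=cost_{R,D}(M)$ because each step preserved the cost.

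The main obstacle is termination, not the cost bookkeeping: uncrossing $e_1,e_2$ can create a \emph{new} entwinement between one of the re-paired intervals and some third pair, so a naive count of entwined pairs need not decrease. The strictly monotone potential $\Phi$ together with the finiteness of the matching set is exactly what rules out this looping and guarantees the procedure halts. The second point that must be handled with care is co-orientation of overlapping pairs: this is what makes the swap cost-neutral—for oppositely oriented entwined pairs the analogous swap would strictly \emph{lower} the cost rather than preserve it—and it is supplied by \Cref{prop: matching direction} for the (minimum-cost) matchings to which this result is applied.
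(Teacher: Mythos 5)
Your proof is correct and rests on the same uncrossing move as the paper's; the genuine difference is the termination argument. The paper tracks the lexicographic pair consisting of $|S_M|$ (the number of pairs entwined with some other pair) and the length of the longest interval among them, always uncrossing a longest entwined pair and arguing that each swap either shrinks $|S_M|$ or keeps it fixed while lengthening that longest interval. Your single monovariant $\Phi(M)=\sum_{(r,d)\in M}|r-d|^2$ -- which strictly increases because the swap fixes the sum of the two affected lengths while strictly decreasing their product, the strictness coming from the fact that entwinement forces $r_1<r_2<d_1<d_2$ -- is cleaner: it requires no particular choice of which entwined pair to uncross and sidesteps the paper's somewhat delicate claim that every pair entwined with one of the new pairs was already entwined with one of the old ones. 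Both arguments then conclude via finiteness of the set of matchings. Note that both proofs lean on \Cref{prop: matching direction} to guarantee co-orientation of the entwined pairs; without co-orientation the analogous swap strictly \emph{lowers} the cost, so the proposition as literally stated for ``any matching'' in fact only holds for matchings whose overlapping pairs are co-oriented (such as cost-minimizing ones, which is how it is used), and you flag this correctly where the paper does not. The one small step worth making explicit is that a cost-preserving swap applied to a cost-minimizing matching yields another cost-minimizing matching, so \Cref{prop: matching direction} remains available at every subsequent step of the procedure.
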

\begin{proof}
Given a matching $M$, we construct a nested matching from $M$ with the same cost sequentially. Let $S_M$ be the set of the pairs in $M$ which are entwined with at least another pair, i.e.,
\[S_M=\{e\in M : \exists e'\in M \text{ such that $e$ and $e'$ are entwined.}\}\]
Next, we describe a procedure that at each step either the size of $S_M$ decreases, or its size stays the same but the length of the largest interval corresponding to a pair in $S_M$ increases.

Let $e_1$ be a pair with the largest $|I_{e_1}|$ among all the pairs in $S_M$.   Since $e_1\in S_M$, there exists a pair $e_2=(r_2,d_2)$ such that $e_1$ and $e_2$ are entwined. 
By  \Cref{prop: matching direction}, we know that $e_1$ and $e_2$ must be in the same direction. Without loss of generality, assume that they are oriented to the right and also that $r_1\leq r_2$. As a result, $r_1\leq r_2\leq d_1\leq d_2$, and
\[cost(\{e_1,e_2\})=  |d_2-r_2|+|d_1-r_1|=(d_2-r_1)+(d_1-r_2).\]
Consider the nested pairs $e_1'=(r_1,d_2)$ and $e_2'=(r_2,d_1)$. Note that $cost(\{e_1,e_2\})=cost(\{e'_1,e'_2\})$. Let $M'=(M\setminus\{e_1,e_2\})\cup \{e'_1,e'_2\}$. So, by swapping the pairs $e_1$ and $e_2$ with $e_1'$ and $e_2'$ the cost of matching does not change, $cost(M')=cost(M)$. 
Moreover,   $|S_{M'}|\leq |S_M|$, because any pair that is entwined with $e_1'$ or $e_2'$  is also entwined with either $e_1$ and $e_2$. Furthermore, $|I_{e_1'}|> |I_{e_1}|$. So, after swapping the pairs either $|S_{M'}|< |S_M|$ or  if $|S_{M'}|= |S_M|$ then the length of the largest interval in $S_{M'}$ is larger than $|I_{e'_1}|$. Each of these events can happen only for a finite number of times. Therefore,  the procedure described above terminates after some time. This implies that the set $S_{M'}$ becomes empty, and all the overlapping pairs will be nested.
\end{proof}

 \subsection{The walk on an (almost) balanced market}\label{sec: RW balanced}
  We continue to study the balanced market by translating matching cost into properties of the walk $W$, defined in \eqref{eq: walk}.
Let $M^*$ be a \textit{nested} optimal matching with penalties, which exists
by  \Cref{prop: nested}. 
Since the matching is nested, for any pair $(r,d)\in M^*$, there must be equal number of drivers and riders in $I_{(r,d)}$.  So, we have $W(r)=W(d)$.
 To analyze the distance between the points $r$ and $d$, we need the following definition. For $x\in[0,\ell]$, define the return time as the first time that the walk returns to $W(x)$ after at least one hop,
\begin{equation}\label{eq: defi tau}
\tau_x=\inf_{t>x}\{t:W(t)=W(x), \text{ and } H_t\neq H_x\},
\end{equation}
where $H_T$ was defined in \eqref{eq: definiiton H_T}.

Compare the definition of $\tau_0$ to the exit time  $\gamma_0$, as  defined in \eqref{eq: gamma} (see  \autoref{fig: tau vs gamma}). While $\gamma_0$ gave us an upper bound on the cost of matching, $\tau_0$ will give us a lower bound. 
In fact, for any  right oriented pair $(r,d)\in M^*$,  since $W(r)=W(d)$ we must have that $|d-r|\geq \tau_r$. 

In matching with penalties, we have the option to skip matching the rider $r$ by paying the cost $\nu$. So, the following result analyzes $\tau_0\wedge\nu$, where $x\wedge y=\min(x,y)$.
The idea is to first give a lower bound on the number of hops $H_{\tau_0}$, and then deduce a lower bound on  $\tau_0$.
\begin{lemma}\label{lm: tau lower bound}  Let $R=\{r_1,\ldots,r_n\}$ and $D=\{d_1,\ldots, d_m\}$ be two sets of random points drawn independently from a uniform distribution on $[0,\ell]$, where for some constant $c>0$, we have $0\leq m-n\leq c n^{1/2}$. Also, let $\tau_0$ be as in \eqref{eq: defi tau}, and $\nu\geq \frac{\ell}{n^{1/2-\delta}}$ for some $\delta\in(0,1/2]$.  
Then there exists some constant $C>0$ such that for large enough $n$
 \[\mathbb{E}[\tau_0\wedge \nu]\geq  C \ell n^{\frac{\delta}{2}-\frac{3}{4}}.\]
\end{lemma}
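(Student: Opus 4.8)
The plan is to separate two independent sources of randomness: the \emph{pattern} of the walk (the left-to-right sequence of driver/rider labels) and the \emph{positions} of its hops. First I would reduce to the extreme case $\nu=\nu_0:=\ell n^{-1/2+\delta}$, which loses nothing because $\mathbb{E}[\tau_0\wedge\nu]$ is nondecreasing in $\nu$ and every admissible $\nu$ satisfies $\nu\ge\nu_0$. Write $z_{(1)}\le\cdots\le z_{(n+m)}$ for the sorted positions of $R\cup D$ and $\xi_j\in\{+1,-1\}$ for the label of $z_{(j)}$ ($+1$ driver, $-1$ rider). Since all $n+m$ points are independent uniform samples, the rank-ordered labels $(\xi_j)_j$ form a uniformly random arrangement of $m$ ones and $n$ minus-ones, \emph{independent} of the order statistics $(z_{(j)})_j$. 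The walk $W$ hops precisely at the $z_{(j)}$ with increments $\xi_j$, so $H_{\tau_0}$ (the number of hops up to the first return to level $0$) is a function of the pattern alone, while $\tau_0=z_{(H_{\tau_0})}$ whenever a return occurs (and $\tau_0=\infty$ otherwise, which only helps the lower bound).

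Given this decoupling, I would fix a threshold $K^\star\le (n+m)/2$ (of order $n^{1/2+\delta}$, capped at a constant fraction of $n$ to handle $\delta=1/2$) and bound
\[
\mathbb{E}[\tau_0\wedge\nu_0]\ \ge\ \mathbb{E}\big[(\tau_0\wedge\nu_0)\,\mathbf{1}\{H_{\tau_0}\ge K^\star\}\big]\ \ge\ \mathbb{E}\big[z_{(K^\star)}\wedge\nu_0\big]\cdot\mathbb{P}\big(H_{\tau_0}\ge K^\star\big),
\]
where the last step uses $\tau_0\ge z_{(K^\star)}$ on $\{H_{\tau_0}\ge K^\star\}$ together with the independence of $z_{(K^\star)}$ and $H_{\tau_0}$. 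This splits the problem into (i) a lower bound on the no-return probability $\mathbb{P}(H_{\tau_0}\ge K^\star)$ and (ii) a concentration statement for the order statistic $z_{(K^\star)}$.

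Step (i) is the main obstacle. I would lower bound the no-return probability by the probability that the first $K^\star$ partial sums are \emph{strictly positive}. Conditioning on $U:=\#\{\text{up-steps among the first }K^\star\text{ hops}\}$, which is hypergeometric with mean $\tfrac{K^\star m}{m+n}\ge K^\star/2$, Bertrand's ballot theorem gives that the conditional probability of staying strictly positive equals $(2U-K^\star)^+/K^\star$; hence
\[
\mathbb{P}(H_{\tau_0}\ge K^\star)\ \ge\ \frac{2}{K^\star}\,\mathbb{E}\big[(U-K^\star/2)^+\big]\ \ge\ \frac{1}{K^\star}\,\mathbb{E}\big|U-\mathbb{E} U\big|,
\]
the second inequality because $\mathbb{E} U\ge K^\star/2$. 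The delicate point is to show this mean absolute deviation is of order $\sqrt{K^\star}$: Hölder's inequality gives $\mathbb{E}|U-\mathbb{E} U|\ge \mathrm{Var}(U)^{3/2}/(\mathbb{E}|U-\mathbb{E} U|^4)^{1/2}$, and since $\mathrm{Var}(U)=\Theta(K^\star)$ for $K^\star\le (n+m)/2$ with $m\asymp n$, while the fourth central moment of a hypergeometric is $O(\mathrm{Var}(U)^2)$ in this regime, one gets $\mathbb{E}|U-\mathbb{E} U|\ge c\sqrt{K^\star}$ and therefore $\mathbb{P}(H_{\tau_0}\ge K^\star)\ge c_0/\sqrt{K^\star}$. (Alternatively the same tail can be extracted by a path-counting argument in the spirit of \Cref{lm: dist hat H_i}.)

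For step (ii), $z_{(K^\star)}/\ell\sim\mathrm{Beta}(K^\star,\,n+m+1-K^\star)$ has mean $\tfrac{K^\star}{n+m+1}$ and coefficient of variation $O(1/\sqrt{K^\star})\to0$, so choosing $K^\star$ so that this mean is a fixed multiple of $\nu_0/\ell$ (while keeping $K^\star\le (n+m)/2$) gives $\mathbb{E}[z_{(K^\star)}\wedge\nu_0]\ge c_2\,\nu_0$ by Chebyshev. Combining (i) and (ii) with $K^\star\asymp n^{1/2+\delta}$,
\[
\mathbb{E}[\tau_0\wedge\nu_0]\ \ge\ c_2\nu_0\cdot\frac{c_0}{\sqrt{K^\star}}\ =\ \Theta\big(\ell\,n^{-1/2+\delta}\cdot n^{-1/4-\delta/2}\big)\ =\ \Theta\big(\ell\,n^{\delta/2-3/4}\big),
\]
which is the claimed bound. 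The boundary case $\delta=1/2$ (where $\nu_0=\ell$) is recovered by taking $K^\star$ a constant fraction of $n$: then $z_{(K^\star)}\le\ell=\nu_0$ forces $z_{(K^\star)}\wedge\nu_0=z_{(K^\star)}$, so $\mathbb{E}[z_{(K^\star)}\wedge\nu_0]=\Theta(\ell)$, and the same arithmetic yields $\Theta(\ell\,n^{-1/2})$.
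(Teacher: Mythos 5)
Your argument is correct, and it shares the paper's overall skeleton --- decoupling the up/down \emph{pattern} of the walk from the hop \emph{positions}, reducing the problem to how long $W$ survives above level zero, and using the $\mathrm{Beta}$ law of order statistics --- but it obtains the central probabilistic estimate by a genuinely different route. The paper computes the exact pointwise distribution $\mathbb{P}(H_{\tau_0}=2k)\ge C k^{-3/2}$ by lattice-path enumeration (the cycle-lemma count $\tfrac{2}{k+1}\binom{2k}{k}$ followed by delicate estimates of the ratio $\binom{m+n-2k}{n-k}/\binom{m+n}{n}$, which is where the hypothesis $m-n\le c\sqrt n$ does its work), and then sums $\sum_k \mathbb{P}(H_{\tau_0}=2k)\,\mathbb{E}[\ell_{2k}\wedge\nu]$ over $k\lesssim n^{1/2+\delta}$. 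You instead bound a single tail probability $\mathbb{P}(H_{\tau_0}\ge K^\star)\ge c_0/\sqrt{K^\star}$ at the one scale $K^\star\asymp n^{1/2+\delta}$ that matters, via the ballot theorem conditional on the hypergeometric up-step count $U$ together with the anti-concentration bound $\mathbb{E}|U-\mathbb{E}U|\ge c\sqrt{K^\star}$, and then multiply by $\mathbb{E}[z_{(K^\star)}\wedge\nu_0]\ge c_2\nu_0$; your product-of-two-factors assembly correctly exploits the independence of $z_{(K^\star)}$ and the event $\{H_{\tau_0}\ge K^\star\}$, and your separate treatment of $\delta=1/2$ is needed and handled properly. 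What each approach buys: yours avoids the Stirling and binomial-ratio manipulations entirely and uses the imbalance hypothesis only through $m\asymp n$ (upward drift can only help a one-sided survival bound), so it is somewhat more robust; the paper's yields the full distribution of $H_{\tau_0}$, which is sharper and mirrors the enumeration already carried out for the upper bound in \Cref{lm: dist hat H_i}. The one step you should still write out or cite is the fourth-central-moment bound $\mathbb{E}(U-\mathbb{E}U)^4=O(\mathrm{Var}(U)^2)$ for the hypergeometric in the regime $K^\star\le(n+m)/2$; it follows from Hoeffding's comparison of sampling without replacement to sampling with replacement, so it is a missing citation rather than a gap.
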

\begin{proof}
To prove the lemma, we start with bounding the number of hops in $[0,\tau_0]$. In fact, for any $1\leq k\leq\min(\frac{n}{2}, \frac{mn}{5(m-n)^2})$, we claim there exists a  constant $C>0$ 
such that,
\begin{equation}\label{eq: dist H_tau}
 \mathbb{P}[H_{\tau_0}=2k]\geq  \frac{C }{k\sqrt{k}}.
 \end{equation}

Consider the lattice path $P$, constructed from $W$ as in   \Cref{lm: dist hat H_i}. Assume that $H_{\tau_0}=2k$. Then the first part of $P$ is a path from $(0,0)$ to $(2k,0)$ that does not touch the $x$-axis except at the end-points. Such paths can either always stay above or below the $x$-axis, which implies the number of them is $\frac{2{2k\choose k}}{k+1}$. Then by adding any path from $(2k,0)$ to $(m+n-2k,m-n)$ we get a lattice path with $2k$ hops before the first return to zero.  Therefore,
\begin{align*}
    \mathbb{P}(H_{\tau_0}=2k)&=2\frac{{2k\choose k} {n+m-2k\choose n-k}}{(k+1){n+m\choose n}}\\
    & =2 \frac{{2k\choose k} }{(k+1)}\prod_{i=0}^{k-1}\frac{(n-i)(m-i)}{(n+m-2i)(n+m-2i-1)}\\
    &\geq \frac{2^{2k}}{(k+1)\sqrt {\pi k}}\prod_{i=0}^{k-1}\frac{1}{4} \big(1-\frac{m-n}{n-i}\frac{m-n}{m-i}\big)
\end{align*}
where the last  inequality is by Stirling's approximation and the following observation
\begin{align*}
\frac{(n-i)(m-i)}{(n+m-2i)(n+m-2i-1)}&\geq \frac{1}{(2+\frac{m-n}{n-i})(2-\frac{m-n}{m-i})}\\
&\geq\frac{1}{4}(1-\frac{m-n}{n-i}\frac{m-n}{m-i}).
\end{align*}
Using the inequality $\frac{m-n}{n-i}\frac{m-n}{m-i}\leq \frac{m-n}{n-k}\frac{m-n}{m-k}$,
we get,
\[    \mathbb{P}(H_{\tau_0}=2k)\geq\frac{2}{(k+1)\sqrt {\pi k}} \big(1-\frac{m-n}{n-k}\frac{m-n}{m-k}\big)^{k} .\]

Note that  if $k\leq \min(\frac{n}{2},\frac{1}{5}\frac{nm}{(m-n)^2})$ then
\[ \big(1-\frac{m-n}{n-k}\frac{m-n}{m-k}\big)^{k}\geq 1-k\frac{m-n}{n-k}\frac{m-n}{m-k}\geq 1-4k\frac{(m-n)^2}{mn}\geq \frac{1}{5}.\]
As a result, there exits a constant $C>0$ such that for any $k\geq 1$
\[ \mathbb{P}(H_{\tau_0}=2k)\geq \frac{C}{k\sqrt{k}},\]
which proves \eqref{eq: dist H_tau}.

Now, going back to the walk $W$, one can write
\begin{equation}\label{eq: equality hops and tau}
\mathbb{E}[\tau_0\wedge \nu]=\sum_{k=0}^\infty \mathbb{P}(H_{\tau_0}=2k)\mathbb{E}[\tau_0\wedge \nu|H_{\tau_0}=2k].
\end{equation}
Let $\ell_i$ be the position of the $i^{th}$ hop on $[0,\ell]$.  Then
\[\mathbb{E}[\tau_0\wedge \nu|H_{\tau_0}=2k]=\mathbb{E}[\ell_{2k}\wedge \nu|H_{\tau_0}=2k]=\mathbb{E}[\ell_{2k}\wedge \nu],\]
where the second equality holds because the length of a hop does not depend on whether it is a hop up or a hop down. 
On the other hand, since the $k^{th}$ order statistic of $n+m$ independent uniform samples  on $[0,1]$ is distributed as $Beta(k,n+m+1-k)$, using Chebyshev's inequality for $k\leq \frac{\nu(n+m)}{4\ell}$ and $\nu\geq \ell/n^{1/2-\delta}$ we get
\begin{align}\label{eq: bd ell 2k }
\begin{split}
    \mathbb{P}(\ell_{2k}\geq \nu)
    &\leq \frac{\frac{k\ell^2}{(n+m+1)^2}}{\frac{k\ell^2}{(n+m+1)^2}+(\nu-\frac{2k\ell}{n+m+1})^2}\\
    &\leq\frac{4k\ell^2}{\nu^2(n+m+1)^2} \leq \frac{4k}{n^{2\delta}(n+m+1)}.
    \end{split}
\end{align}
Let us denote $|x|_+=\max(0,x)$. Then using the above arguments for $k\leq \frac{n^{1/2+\delta}}{4}\leq \frac{\nu(n+m)}{4\ell}$, 
\begin{align*}
    \mathbb{E}[\ell_{2k}\wedge \nu]&= \mathbb{E}[\ell_{2k}]-\mathbb{E}|\ell_{2k}-\nu|_+\\
    &\geq \frac{2k\ell}{n+m+1}-|\ell-\nu|_+\mathbb{P}(\ell_{2k}\geq\nu)\\
    & \geq \frac{k\ell}{n+m+1}(2-\frac{4}{n^{2\delta}}),
\end{align*}
where the first inequality is by the fact that $|\ell_{2k}-\nu|\leq (\ell -\nu)$, and the second inequality is by \eqref{eq: bd ell 2k }. 
Since $\delta>0$, there exists a constant $\alpha>0$ such that $2-\frac{4}{n^{2\delta}}>\alpha$ for large enough $n$.
Combining the previous inequality with \eqref{eq: dist H_tau} and  \eqref{eq: equality hops and tau}  
\begin{align*}
    \mathbb{E}[\tau_0\wedge \nu]&\geq \alpha\sum_{k=0}^{\frac{n^{\delta+1/2}}{4}} \mathbb{P}(H_{\tau_0}=2k)\frac{k\ell}{n+m+1}\\
    &\geq\frac{\alpha\ell}{n+m+1}\sum_{k=1}^{\min(\frac{n}{2},\frac{nm}{5(n-m)^2},\frac{n^{\delta+1/2}}{4})}\frac{C}{\sqrt k}\\
   & \geq C'\ell\min(\frac{1}{\sqrt{n}},\frac{1}{m-n},n^{\frac{\delta}{2}-\frac{3}{4}}),
\end{align*}
where $C'$ does not depend on $n$, $\ell$. Now, using the condition that $m-n\leq cn^{1/2}$ we get the result.
\end{proof}

\subsection{Lower bound on the cost of matching}\label{sec: balanced proof}
This section is on the proof of  \Cref{thm: balanced}.
First, note that when $\nu\geq \ell$, we get a lower cost to match any riders with some rider than to pay the cost $\ell$. So,  Part \ref{part: balanced thm no cost} of  \Cref{thm: balanced} is equivalent to Part \ref{thm:balancedomn} of  \Cref{thm:balancedvsunbalanced}. So, we only give a proof for  \Cref{thm: balanced}.
Note also that, the lower bound in Part \ref{part: balanced thm no cost}  is a special case of  Part \ref{part: balanced thm with cost} when $\delta=\frac{1}{2}$. However, we give a separate proof by observing that in the balanced market the cost of the optimal matching (without any penalties) is equal to the area below the walk $W$, that has been studied in \citealt{RW-surface}.

\begin{proof}[Proof of  \Cref{thm: balanced}, Part \ref{part: balanced thm no cost}]
Given  the set of riders $R=\{r_1,\ldots,r_n\}$ and drivers $D=\{d_1,\ldots,d_n\}$ in the interval $[0,\ell]$,  let 
\[r_{(1)}\leq r_{(2)}\leq \cdots\leq r_{(n)},\quad\quad d_{(1)}\leq d_{(2)}\leq \cdots\leq d_{(n)},\]
be the order statistics of the sets $R$ and $D$.
We claim that pairing $r_{(i)}$  to $d_{(i)}$ is the minimum cost matching. Given this claim, by Proposition 2 in \citealt{RW-surface} and Stirling's approximation,
\[\mathbb{E}[cost( M^*)]=\mathbb{E}[\sum_{i=1}^n |r_{(i)}-d_{(i)}|]=\frac{n2^{2n-1}}{(2n+1){2n\choose n}}\ell \sim \frac{\ell\sqrt{\pi n}}{4},\]
which is the desired result.

 Let $M^*$ be the matching returned by omniscient.
Assume that $i$ is the first index which $r_{(i)}$ is not matched  to $d_{(i)}$ by omniscient. So, there exits some $j> i$ that $e_1=(r_{(i)},d_{(j)})\in M^*$. Since the market is balanced and $i$ is the first index that $(r_{(i)},d_{(i)})\not\in M^*$, there must exist some index $j'>i$  such that $e_2=(r_{(j')},d_{(i)})\in M^*$. Therefore, $e_1$ and $e_2$ are overlapping pairs, and we can swap them as in  \Cref{prop: nested} without increasing the cost of matching. As a result of the swap $r_{(i)}$ is matched  to $d_{(i)}$, and we can continue repeating this procedure.
\end{proof}

\begin{proof}[Proof of  \Cref{thm: balanced}, Part \ref{part: balanced thm with cost}]
Again,  let 
\[r_{(1)}\leq r_{(2)}\leq \cdots\leq r_{(n)},\quad\quad d_{(1)}\leq d_{(2)}\leq \cdots\leq d_{(n)},\]
be the order statistics of the sets $R$ and $D$. 
Next we give a lower bound on $\mathbb{E}[cost(M^*)]$, where $M^*$ is a \textit{nested} optimal matching with penalties, which we know it exists by  \Cref{prop: nested}. 
Let $e=(r,d)$ be a matched  pair in $M^*$. Since there is no unmatched rider or driver in $I_{e}$ and $M^*$ is nested, there must be equal number of riders and drivers in $I_e$.
So, if a rider at position $r$ is matched to driver on its right then $|I_{r,M^*(r)}|=\tau_r$,
where $\tau_r$  is defined \eqref{eq: defi tau}.
We claim that for $\frac{n}{3}\leq i\leq \frac{2n}{3}$, there is a positive probability that $r_{(i)}$ is either matched to a driver on its right or it is not matched at all. For that purpose we show that there exists some constant $\alpha>0$ such that  
$\mathbb{P}(-4\sqrt n\leq W_{r_{(i)}}\leq -3\sqrt{n})\geq \alpha$. 
To prove the claim, again we use lattice path presentation of the walk $W$, and we get
\begin{align}\label{eq: distance from expectation W}
\begin{split}
\mathbb{P}(-4\sqrt n\leq W_{r_{(i)}}\leq -3\sqrt{n})&=\sum_{k=3\sqrt{n}}^{4\sqrt n}\frac{{2i+k\choose i}{2n-2i-k\choose n-i}}{{2n\choose n}}\\
&\geq C\sum_{k=3\sqrt{n}}^{4\sqrt n}\frac{\sqrt n}{\sqrt{(2i+k)(2n-2i-k)}}e^{-\frac{k^2}{n}}\\
&\geq C\sum_{k=3\sqrt{n}}^{4\sqrt n}\frac{\sqrt n}{\sqrt{(\frac{4n}{3})^2-k^2}}e^{-16}
\geq C'\frac{3}{4},
\end{split}
\end{align}
where $C'$ is independent from $n$ and the second inequality is from equation (5.41) in \citealt{Asymptopia}. 
If $r_{(k)}$ is in a left oriented pair, then by $W_{r(k)}\leq -3\sqrt n$, there must exists at least $3\sqrt n$ unmatched riders on the left of $r_{(k)}$. The reason is non of them can match to a driver on the right of $r_{(k)}$ by  \Cref{prop: matching direction}. The cost of the unmatched riders in that case is at least $3\nu\sqrt n\geq n^{\delta}\ell$. So, we need to consider the events that $r_{(k)}$ is either unmatched or matched to driver on its right.  In this case, cost of matching $r_{(k)}$, denoted by $cost(r_{(k)})$, is at least $\min(\nu,\tau_{r_{(k)}})$.   Let the event $L_{r_{(k)}}$ indicate whether $r_{(k)}$ is in a left-oriented pair. Then
combining this observation with   \eqref{eq: distance from expectation W}, 
\begin{align*}
   \mathbb{E}[cost(M^*))]&\geq\sum_{i=\frac{n}{3}}^{\frac{2n}{3}}    \mathbb{E}[cost(r_{(k)})]\\
   &\geq C' \sum_{i=\frac{n}{3}}^{\frac{2n}{3}}    \mathbb{E}\big[cost(r_{(k)})\mid -4\sqrt n \leq W_{r_{(k)}}\leq -3\sqrt{n}\big] 
   \\
  &\geq C'\sum_{i=\frac{n}{3}}^{\frac{2n}{3}}\Big(\ell n^{\delta}\mathbb{P}(L_{r_{(k)}})+  \mathbb{E}\big[cost(r_{(k)})\mid -4\sqrt n \leq W_{r_{(k)}}\leq -3\sqrt{n}, L_{r_{(k)}}\big]\big(1-\mathbb{P}(L_{r_{(k)}})\big)\Big)
  \end{align*}
  Now, in the case $|W_{r(k)}|\leq 4\sqrt{n}$ we can apply  \Cref{lm: tau lower bound} to get
  \begin{align*}
  \mathbb{E}[cost(M^*))]&\geq C'\sum_{i=\frac{n}{3}}^{\frac{2n}{3}}\min\Big(\ell n^{\delta},  \mathbb{E}\big[\tau_{r_{(k)}}\wedge \nu\mid -4\sqrt n \leq W_{r_{(k)}}\leq -3\sqrt{n}\big]\Big)\\
   &\geq C''\min\Big(\ell n^{\delta}, \ell n^{\frac{1}{4}+\frac{\delta}{2}}\Big)\geq C''\ell n^{\delta},
\end{align*}
which is the desired result.
\end{proof}

\bibliographystyle{plainnat}
\bibliography{ref}

\begin{thebibliography}{50}
\providecommand{\natexlab}[1]{#1}
\providecommand{\url}[1]{\texttt{#1}}
\expandafter\ifx\csname urlstyle\endcsname\relax
  \providecommand{\doi}[1]{doi: #1}\else
  \providecommand{\doi}{doi: \begingroup \urlstyle{rm}\Url}\fi

\bibitem[Abadi and Prabhakar(2017)]{Abadi2017StableMI}
H.~K. Abadi and B.~Prabhakar.
\newblock Stable matchings in metric spaces: Modeling real-world preferences
  using proximity.
\newblock \emph{ArXiv}, abs/1710.05262, 2017.

\bibitem[Ajtai et~al.(1984)Ajtai, Koml{\'o}s, and Tusn{\'a}dy]{Ajtai1984OnOM}
M.~Ajtai, J.~Koml{\'o}s, and G.~Tusn{\'a}dy.
\newblock On optimal matchings.
\newblock \emph{Combinatorica}, 4:\penalty0 259--264, 1984.

\bibitem[Akbarpour et~al.(2020{\natexlab{a}})Akbarpour, Li, and
  Gharan]{AkbarpourLi}
M.~Akbarpour, S.~Li, and S.~Oveis Gharan.
\newblock Thickness and information in dynamic matching markets.
\newblock \emph{Journal of Political Economy}, 128\penalty0 (3):\penalty0
  783--815, 2020{\natexlab{a}}.
\newblock \doi{10.1086/704761}.
\newblock URL \url{https://doi.org/10.1086/704761}.

\bibitem[Akbarpour et~al.(2020{\natexlab{b}})Akbarpour, Combe, He, Hiller,
  Shimer, and Tercieux]{akbarpour2020unpaired}
Mohammad Akbarpour, Julien Combe, Yinghua He, Victor Hiller, Robert Shimer, and
  Olivier Tercieux.
\newblock Unpaired kidney exchange: Overcoming double coincidence of wants
  without money.
\newblock In \emph{Proceedings of the 21st ACM Conference on Economics and
  Computation}, pages 465--466, 2020{\natexlab{b}}.

\bibitem[Ambrosio et~al.(2016)Ambrosio, Stra, and Trevisan]{Trevisan2016}
L.~Ambrosio, F.~Stra, and D.~Trevisan.
\newblock A pde approach to a 2-dimensional matching problem.
\newblock \emph{Probability Theory and Related Fields}, 173:\penalty0 433--477,
  2016.

\bibitem[Anderson et~al.(2015)Anderson, Ashlagi, Gamarnik, and
  Kanoria]{AndersonAshlagi}
R.~Anderson, I.~Ashlagi, D.~Gamarnik, and Y.~Kanoria.
\newblock A dynamic model of barter exchange.
\newblock In \emph{Proceedings of the Twenty-Sixth Annual ACM-SIAM Symposium on
  Discrete Algorithms (SODA)}, page 1925–1933, USA, 2015. Society for
  Industrial and Applied Mathematics.

\bibitem[Antoniadis et~al.(2015)Antoniadis, Barcelo, Nugent, Pruhs, and
  Scquizzato]{Antoniadis2015}
A.~Antoniadis, N.~Barcelo, M.~Nugent, K.~Pruhs, and M.~Scquizzato.
\newblock A {$o(n)$}-competitive deterministic algorithm for online matching on
  a line.
\newblock In \emph{Approximation and Online Algorithms}, pages 11--22, Cham,
  2015. Springer International Publishing.
\newblock ISBN 978-3-319-18263-6.

\bibitem[Arie(1993)]{RW-surface}
H.~Arie.
\newblock Random walk and the area below its path.
\newblock \emph{Mathematics of Operations Research}, 1993.

\bibitem[Arnosti and Shi(2019)]{arnosti2019not}
Nick Arnosti and Peng Shi.
\newblock How (not) to allocate affordable housing.
\newblock In \emph{AEA Papers and Proceedings}, volume 109, pages 204--08,
  2019.

\bibitem[Ashlagi et~al.(2013)Ashlagi, Jaillet, and
  Manshadi]{AshlaghiKidney2013}
I.~Ashlagi, P.~Jaillet, and V.~Manshadi.
\newblock Kidney exchange in dynamic sparse heterogenous pools.
\newblock In \emph{Proceedings of the 2013 ACM Conference on Economics and
  Computation (EC)}, page 25–26, New York, NY, USA, 2013.
\newblock ISBN 9781450319621.
\newblock \doi{10.1145/2492002.2482565}.
\newblock URL \url{https://doi.org/10.1145/2492002.2482565}.

\bibitem[Ashlagi et~al.(2017)Ashlagi, Kanoria, and Leshno]{Ashlagi17}
I.~Ashlagi, Y.~Kanoria, and J.~D. Leshno.
\newblock Unbalanced random matching markets: The stark effect of competition.
\newblock \emph{Journal of Political Economy}, 125\penalty0 (1):\penalty0
  69--98, 2017.
\newblock \doi{10.1086/689869}.
\newblock URL \url{https://doi.org/10.1086/689869}.

\bibitem[Ashlagi et~al.(2019)Ashlagi, Nikzad, and Strack]{ashlagi2019matching}
Itai Ashlagi, Afshin Nikzad, and Philipp Strack.
\newblock Matching in dynamic imbalanced markets.
\newblock \emph{Available at SSRN 3251632}, 2019.

\bibitem[Baccara et~al.(2020)Baccara, Lee, and Yariv]{Baccara2020}
M.~Baccara, S.~Lee, and L.~Yariv.
\newblock Optimal dynamic matching.
\newblock \emph{Theoretical Economics}, 15\penalty0 (3):\penalty0 1221--1278,
  2020.
\newblock \doi{https://doi.org/10.3982/TE3740}.
\newblock URL \url{https://onlinelibrary.wiley.com/doi/abs/10.3982/TE3740}.

\bibitem[Bansal et~al.(2007)Bansal, Buchbinder, Gupta, and Naor]{Bansal07}
N.~Bansal, N.~Buchbinder, A.~Gupta, and J.~(Seffi) Naor.
\newblock An o(log2k)-competitive algorithm for metric bipartite matching.
\newblock In \emph{Algorithms -- ESA 2007}, pages 522--533, Berlin, Heidelberg,
  2007. Springer Berlin Heidelberg.
\newblock ISBN 978-3-540-75520-3.

\bibitem[Bertoin and Doney(1996)]{bertoin_doney_1996}
J.~Bertoin and R.~A. Doney.
\newblock Some asymptotic results for transient random walks.
\newblock \emph{Advances in Applied Probability}, 28\penalty0 (1):\penalty0
  207–226, 1996.
\newblock \doi{10.2307/1427918}.

\bibitem[Besbes et~al.(2019)Besbes, Castro, and Lobel]{Omar2020}
O.~Besbes, F.~Castro, and I.~Lobel.
\newblock Surge pricing and its spatial supply response.
\newblock \emph{Operations Research eJournal}, 2019.

\bibitem[Bimpikis et~al.(2017)Bimpikis, Candogan, and Saban]{OzanSaban}
K.~Bimpikis, O.~Candogan, and D.~Saban.
\newblock Spatial pricing in ride-sharing networks.
\newblock In \emph{Proceedings of the 12th Workshop on the Economics of
  Networks, Systems and Computation}, NetEcon, 2017.
\newblock ISBN 9781450350891.

\bibitem[Bloch and Houy(2012)]{blochHouy}
F.~Bloch and N.~Houy.
\newblock Optimal assignment of durable objects to successive agents.
\newblock \emph{Economic Theory}, 51\penalty0 (1):\penalty0 13--33, 09 2012.
\newblock URL
  \url{https://search.proquest.com/scholarly-journals/optimal-assignment-durable-objects-successive/docview/1037281422/se-2?accountid=14026}.
\newblock Springer-Verlag.

\bibitem[Bulow and Klemperer(1996)]{KlempererBulow}
Jeremy Bulow and Paul Klemperer.
\newblock Auctions versus negotiations.
\newblock \emph{The American Economic Review}, 86\penalty0 (1):\penalty0
  180--194, 1996.
\newblock ISSN 00028282.
\newblock URL \url{http://www.jstor.org/stable/2118262}.

\bibitem[Castillo(2020)]{castillo2020benefits}
Juan~Camilo Castillo.
\newblock Who benefits from surge pricing?
\newblock \emph{Available at SSRN 3245533}, 2020.

\bibitem[Devanur et~al.(2013)Devanur, Jain, and
  Kleinberg]{DevanurJainKleinberg}
N.~Devanur, K.~Jain, and R.~Kleinberg.
\newblock Randomized primal-dual analysis of ranking for online bipartite
  matching.
\newblock In \emph{Proceedings of the Twenty-Fourth Annual ACM-SIAM Symposium
  on Discrete Algorithms (SODA)}, page 101–107, USA, 2013. Society for
  Industrial and Applied Mathematics.
\newblock ISBN 9781611972511.

\bibitem[Frieze et~al.(1990)Frieze, McDiarmid, and Reed]{FriezeReed90}
A.~Frieze, C.~McDiarmid, and B.~Reed.
\newblock Greedy matching on the line.
\newblock \emph{SIAM Journal of Computation}, 19\penalty0 (4):\penalty0
  666–672, June 1990.
\newblock ISSN 0097-5397.
\newblock \doi{10.1137/0219045}.
\newblock URL \url{https://doi.org/10.1137/0219045}.

\bibitem[Gairing and Klimm(2019)]{GAIRING201988}
M.~Gairing and M.~Klimm.
\newblock Greedy metric minimum online matchings with random arrivals.
\newblock \emph{Operations Research Letters}, 47\penalty0 (2):\penalty0 88--91,
  2019.
\newblock ISSN 0167-6377.
\newblock \doi{https://doi.org/10.1016/j.orl.2019.01.002}.
\newblock URL
  \url{https://www.sciencedirect.com/science/article/pii/S0167637718305066}.

\bibitem[Goel and Mehta(2008)]{goelmehta}
G.~Goel and A.~Mehta.
\newblock Online budgeted matching in random input models with applications to
  adwords.
\newblock In \emph{Proceedings of the Annual ACM-SIAM Symposium on Discrete
  Algorithms}, pages 982--991, 01 2008.
\newblock \doi{10.1145/1347082.1347189}.

\bibitem[Gupta et~al.(2019)Gupta, Guruganesh, Peng, and Wajc]{Wajc2019}
A.~Gupta, G.~Guruganesh, B.~Peng, and D.~Wajc.
\newblock Stochastic online metric matching.
\newblock \emph{ArXiv}, abs/1904.09284, 2019.

\bibitem[Hoffman et~al.(2006)Hoffman, Holroyd, and Peres]{hoffman2006}
C.~Hoffman, A.~Holroyd, and Y.~Peres.
\newblock A stable marriage of poisson and lebesgue.
\newblock \emph{Annals of Probability}, 34\penalty0 (4):\penalty0 1241--1272,
  07 2006.
\newblock \doi{10.1214/009117906000000098}.
\newblock URL \url{https://doi.org/10.1214/009117906000000098}.

\bibitem[Holroyd(2011)]{holroyd2009geometric}
A.~E. Holroyd.
\newblock Geometric properties of poisson matchings.
\newblock \emph{Probability Theory and Related Fields}, 150:\penalty0
  511–527, 2011.

\bibitem[Holroyd et~al.(2020)Holroyd, Janson, and
  Wästlund]{holroyd2020minimal}
A.~E. Holroyd, S.~Janson, and J.~Wästlund.
\newblock Minimal matchings of point processes, 2020.

\bibitem[Huang et~al.(2020)Huang, Zhang, and Zhang]{HuangZhang2020}
Z.~Huang, Q.~Zhang, and Y.~Zhang.
\newblock Adwords in a panorama.
\newblock In \emph{2020 IEEE 61st Annual Symposium on Foundations of Computer
  Science (FOCS)}, pages 1416--1426, Los Alamitos, CA, USA, nov 2020. IEEE
  Computer Society.
\newblock \doi{10.1109/FOCS46700.2020.00133}.
\newblock URL
  \url{https://doi.ieeecomputersociety.org/10.1109/FOCS46700.2020.00133}.

\bibitem[Janson(1986)]{Janson86}
S.~Janson.
\newblock Moments for first-passage and last-exit times, the minimum, and
  related quantities for random walks with positive drift.
\newblock \emph{Advances in Applied Probability}, 18\penalty0 (4):\penalty0
  865--879, 1986.
\newblock ISSN 00018678.
\newblock URL \url{http://www.jstor.org/stable/1427253}.

\bibitem[Kalyanasundaram and Pruhs(1991)]{kalyanasundaram1991line}
B.~Kalyanasundaram and K.~Pruhs.
\newblock On-line weighted matching.
\newblock In \emph{Proceedings of the second annual ACM-SIAM symposium on
  Discrete algorithms (SODA)}, pages 234--240, 1991.

\bibitem[Kalyanasundaram and Pruhs(1993)]{kalyanasundaram1993online}
B.~Kalyanasundaram and K.~Pruhs.
\newblock Online weighted matching.
\newblock \emph{Journal of Algorithms}, 14\penalty0 (3):\penalty0 478--488,
  1993.

\bibitem[Kanoria et~al.(2020)Kanoria, Min, and Qian]{kanoria2020random}
Yash Kanoria, Seungki Min, and Pengyu Qian.
\newblock Which random matching markets exhibit a stark effect of competition?,
  2020.

\bibitem[Karp et~al.(1990)Karp, Vazirani, and Vazirani]{KarpVV}
R.~M. Karp, U.~V. Vazirani, and V.~V. Vazirani.
\newblock An optimal algorithm for on-line bipartite matching.
\newblock In \emph{Proceedings of the Twenty-Second Annual ACM Symposium on
  Theory of Computing (STOC)}, page 352–358, New York, NY, USA, 1990.
  Association for Computing Machinery.
\newblock ISBN 0897913612.
\newblock \doi{10.1145/100216.100262}.
\newblock URL \url{https://doi.org/10.1145/100216.100262}.

\bibitem[Kern and Walter(1978)]{catalan_general}
M.~Kern and S.~Walter.
\newblock Ballot theorem and lattice path crossings.
\newblock \emph{The Canadian Journal of Statistics / La Revue Canadienne de
  Statistique}, 6\penalty0 (1):\penalty0 87--90, 1978.
\newblock ISSN 03195724.
\newblock URL \url{http://www.jstor.org/stable/3314829}.

\bibitem[Khuller et~al.(1994)Khuller, Mitchell, and Vazirani]{KHULLER1994255}
S.~Khuller, S.~G. Mitchell, and V.~V. Vazirani.
\newblock On-line algorithms for weighted bipartite matching and stable
  marriages.
\newblock \emph{Theoretical Computer Science}, 127\penalty0 (2):\penalty0
  255--267, 1994.
\newblock ISSN 0304-3975.
\newblock \doi{https://doi.org/10.1016/0304-3975(94)90042-6}.
\newblock URL
  \url{https://www.sciencedirect.com/science/article/pii/0304397594900426}.

\bibitem[Leshno(2019)]{Leshno2019DynamicMI}
J.~D. Leshno.
\newblock Dynamic matching in overloaded waiting lists.
\newblock \emph{Game Theory and Bargaining Theory eJournal}, 2019.

\bibitem[Liu et~al.(2019)Liu, Wan, and Yang]{liu2019efficiency}
Tracy Liu, Zhixi Wan, and Chenyu Yang.
\newblock The efficiency of a dynamic decentralized two-sided matching market.
\newblock \emph{Available at SSRN 3339394}, 2019.

\bibitem[Mahdian et~al.(2007)Mahdian, Nazerzadeh, and
  Saberi]{MahdianSaberiNazer}
M.~Mahdian, H.~Nazerzadeh, and A.~Saberi.
\newblock Allocating online advertisement space with unreliable estimates.
\newblock In \emph{Proceedings of the 8th ACM Conference on Electronic Commerce
  (EC)}, page 288–294, New York, NY, USA, 2007. Association for Computing
  Machinery.
\newblock ISBN 9781595936530.
\newblock \doi{10.1145/1250910.1250952}.
\newblock URL \url{https://doi.org/10.1145/1250910.1250952}.

\bibitem[Manshadi et~al.(2012)Manshadi, Gharan, and
  Saberi]{ManshadiGharanSaberi}
V.~Manshadi, S.~Oveis Gharan, and A.~Saberi.
\newblock Online stochastic matching: Online actions based on offline
  statistics.
\newblock \emph{Mathematics of Operations Research}, 37\penalty0 (4):\penalty0
  559--573, 2012.
\newblock ISSN 0364765X, 15265471.
\newblock URL \url{http://www.jstor.org/stable/23358636}.

\bibitem[Mehta et~al.(2005)Mehta, Saberi, Vazirani, and Vazirani]{MehtaSaberi}
A.~Mehta, A.~Saberi, U.~Vazirani, and V.~Vazirani.
\newblock Adwords and generalized online matching.
\newblock \emph{46th Annual IEEE Symposium on Foundations of Computer Science
  (FOCS)}, 54\penalty0 (5):\penalty0 22–es, October 2005.
\newblock ISSN 0004-5411.
\newblock \doi{10.1145/1284320.1284321}.
\newblock URL \url{https://doi.org/10.1145/1284320.1284321}.

\bibitem[Meyerson et~al.(2006)Meyerson, Nanavati, and
  Poplawski]{Meyerson2006RandomizedOA}
A.~Meyerson, A.~Nanavati, and Laura~J. Poplawski.
\newblock Randomized online algorithms for minimum metric bipartite matching.
\newblock In \emph{Proceedings of the Seventeenth Annual ACM-SIAM Symposium on
  Discrete Algorithms (SODA)}, 2006.

\bibitem[Monvel and Martin(2002)]{Monvel2002AlmostSC}
J.~B.~D. Monvel and O.~Martin.
\newblock Almost sure convergence of the minimum bipartite matching functional
  in euclidean space.
\newblock \emph{Combinatorica}, 22:\penalty0 523--530, 2002.

\bibitem[{Nayyar} and {Raghvendra}(2017)]{NayyarRaghvendra17}
K.~{Nayyar} and S.~{Raghvendra}.
\newblock An input sensitive online algorithm for the metric bipartite matching
  problem.
\newblock In \emph{2017 IEEE 58th Annual Symposium on Foundations of Computer
  Science (FOCS)}, pages 505--515, 2017.
\newblock \doi{10.1109/FOCS.2017.53}.

\bibitem[\"{O}zkan and Ward(2020)]{OzkanWard}
E.~\"{O}zkan and A.~Ward.
\newblock Dynamic matching for real-time ride sharing.
\newblock \emph{Stochastic Systems}, 10\penalty0 (1):\penalty0 29--70, 2020.
\newblock \doi{10.1287/stsy.2019.0037}.
\newblock URL \url{https://doi.org/10.1287/stsy.2019.0037}.

\bibitem[Pitman(1999)]{pitman1999probability}
J.~Pitman.
\newblock \emph{Probability}.
\newblock Springer Texts in Statistics. Springer New York, 1999.
\newblock ISBN 9780387979748.

\bibitem[Pittel(2019)]{pittel2017}
B.~Pittel.
\newblock On likely solutions of the stable matching problem with unequal
  numbers of men and women.
\newblock \emph{Mathematics of Operations Research}, 44\penalty0 (1):\penalty0
  122--146, 2019.
\newblock \doi{10.1287/moor.2017.0917}.
\newblock URL \url{https://doi.org/10.1287/moor.2017.0917}.

\bibitem[Spencer(2014)]{Asymptopia}
J.~Spencer.
\newblock \emph{Asymptopia}.
\newblock American Mathematical Society, 2014.

\bibitem[Tong et~al.(2016)Tong, She, Ding, Chen, Wo, and Xu]{OMBM2016}
Y.~Tong, J.~She, B.~Ding, L.~Chen, T.~Wo, and K.~Xu.
\newblock Online minimum matching in real-time spatial data: Experiments and
  analysis.
\newblock \emph{Proc. VLDB Endow.}, 9\penalty0 (12):\penalty0 1053–1064,
  August 2016.
\newblock ISSN 2150-8097.
\newblock \doi{10.14778/2994509.2994523}.
\newblock URL \url{https://doi.org/10.14778/2994509.2994523}.

\bibitem[{\"U}nver(2010)]{unver2010dynamic}
U.~{\"U}nver.
\newblock Dynamic kidney exchange.
\newblock \emph{The Review of Economic Studies}, 77\penalty0 (1):\penalty0
  372--414, 2010.

\end{thebibliography}




\end{document}